\documentclass[11pt]{article}

\usepackage{fullpage}
\usepackage[T1]{fontenc}
\usepackage{amssymb} 
\usepackage{amsmath}  
\usepackage{amsthm}
\usepackage{times}

\usepackage{xcolor}
\usepackage[normalem]{ulem} 

\usepackage{hyperref}

\title{Semijoins of Annotated Relations}

\begin{document}

\author{Phokion G.\ Kolaitis\\ University of California Santa Cruz \\
  kolaitis@ucsc.edu}

\date{}
\maketitle

\newcommand\phk[1]{\textcolor{blue}{(Phokion) #1}}
\newcommand\albert[1]{\textcolor{green}{(Albert) #1}}


\newcommand{\phrcomment}[1]{\marginpar{\tiny \textbf{M:} #1}}
\newcommand{\aacomment}[1]{\marginpar{\tiny \textbf{A:} #1}}

\newtheorem{lemma}{Lemma}
\newtheorem{claim}{Claim}
\newtheorem{corollary}{Corollary}
\newtheorem{theorem}{Theorem}
\newtheorem{proposition}{Proposition}
\newtheorem{definition}{Definition}
\newtheorem{fact}{Fact}
\newtheorem{example}{Example}


\newenvironment{examplebf}{\refstepcounter{example}\par\bigskip
\noindent\textbf{Example~\theexample.} \rmfamily}{\hfill$\dashv$\medskip}

\newcommand{\supp}{{\mathrm{Supp}}}
\newcommand{\tuples}{{\mathrm{Tup}}}
\newcommand{\domain}{{\mathrm{Dom}}}
\newcommand{\Entropy}{{\mathrm{H}}}
\newcommand{\KL}{{\mathrm{D}}}
\newcommand{\ftp}{transportation property}
\newcommand{\icp}{inner consistency property}
\newcommand{\ltgc}{local-to-global consistency property}

\newcommand{\gcpr}{global consistency problem for relations}
\newcommand{\gcpb}{global consistency problem for bags}
\newcommand{\glcpr}{{\sc GCPR}}
\newcommand{\glcpb}{{\sc GCPB}}

\newcommand{\intcone}{\mathrm{intcone}}
\newcommand{\norm}[1]{\Vert #1 \Vert}
\newcommand{\bnorm}[1]{\norm{#1}_{\mathrm{b}}}
\newcommand{\unorm}[1]{\norm{#1}_{\mathrm{u}}}
\newcommand{\suppnorm}[1]{\norm{#1}_{\mathrm{supp}}}
\newcommand{\munorm}[1]{\norm{#1}_{\mathrm{mu}}}
\newcommand{\mbnorm}[1]{\norm{#1}_{\mathrm{mb}}}

\newcommand{\free}[1]{\mathbb{F}(#1)}

\newcommand{\transpose}{\mathrm{T}}
\newcommand{\Id}{\mathrm{I}}
\newcommand{\trace}{\mathrm{tr}}

\newcommand{\onto}{\stackrel{\scriptscriptstyle{s}}\rightarrow}

\newcommand{\standardjoin}[1]{\Join_{#1,\mathrm{S}}}
\newcommand{\vorobyevjoin}[1]{\Join_{#1,\mathrm{V}}}
\newcommand{\componentwisejoin}[2]{\Join^{#1}_{#2}}

\newcommand{\commentout}[1]{}

\newcommand{\icpadj}{innerly consistent}

\newcommand{\newatop}[2]{\genfrac{}{}{0pt}{2}{#1}{#2}}

\newcommand{\wit}{{\mathrm Wit}}

\newcommand{\cj}{\Join_{\mathrm c}}

\newcommand{\cjoin}{${\mathrm c}$-join}

\newcommand{\umon}{uniformly monotone}

\newcommand{\sleq}{\sqsubseteq}

\newcommand{\sjk}{\ltimes_{\mathbb K}}

\newcommand{\sj}{\mathcal S}

\newcommand{\fpp}{production property}

\thanks{}

\begin{abstract}
The semijoin operation is a fundamental operation of relational algebra that has been extensively used in query processing.  Furthermore, semijoins have been used to formulate desirable properties of acyclic  schemas; in particular,  a schema is acyclic if and only if it has a full reducer, i.e., a sequence of semijoins that converts a given collection of relations to a globally consistent collection of relations. In recent years, the study of acyclicity has been  extended to annotated relations, where the annotations are values from some positive commutative monoid. 
So far, however, it has not been known if the characterization of acyclicity in terms of full reducers extends to annotated relations.  Here, we develop a theory of semijoins of annotated relations. To this effect, we first
introduce the notion of a semijoin function on a monoid and then characterize the positive commutative monoids for which a semijoin function exists. After this, we introduce the notion of a full reducer for a schema on a monoid and  show that the following is true for every positive commutative monoid that has the inner consistency property: a schema is acyclic if and only if it has a full reducer on that monoid.
\commentout{
During the early days of relational database theory it was realized that ``acyclic'' database schemas possess a number of desirable  
properties. In fact, three different notions of ``acyclicity'' were identified and 
investigated during the 1980s, namely,  $\alpha$-acyclicity, $\beta$-acyclicity,  and $\gamma$-acyclicity. Much more recently, the study of $\alpha$-acyclicity was extended to annotated relations, where the annotations are values from some positive commutative monoid. 
The recent results about $\alpha$-acyclic schemas and annotated relations give rise  to
results about $\beta$-acyclic schemas and annotated relations, since a schema is $\beta$-acyclic if and only if every sub-schema of it is $\alpha$-acyclic.
Here, we study $\gamma$-acyclic schemas and annotated relations. Our main finding is that the characterization of $\gamma$-acyclic schemas in terms of monotone sequential join expression extends to annotated relations, provided the annotations come from a positive commutative monoid that has the inner consistency property.
Furthermore, the results reported here shed light on the role of the join of two standard relations. Specifically, our results reveal that
the only relevant property of the join of two standard relations is that it is a witness to the consistency of the two relations, provided that these two relations are consistent. For the more abstract setting of annotated relations, this property of the standard join is captured by the notion of a consistency witness function, a notion which we systematically 
utilize in this work.
}

\end{abstract}

\maketitle

\section{Introduction} \label{sec:intro}

The semijoin operation has been  extensively studied since the early days of relational databases, where the semijoin $R\ltimes T$ of two relations $R$ and $T$ is the relation consisting of the tuples in $R$ that join with at least one tuple in $T$.
Bernstein and Chiu \cite{DBLP:journals/jacm/BernsteinC81} and  then Bernstein and Goodman \cite{DBLP:journals/siamcomp/BernsteinG81}
were the first to investigate the semijoin operation as a tool in query processing   over distributed databases. 
The advantage of using semijoins is that, in certain cases, semijoins make it possible to compute the answers to queries while avoiding  intermediate computations of expensive joins. After this, Yannakakis \cite{DBLP:conf/vldb/Yannakakis81} showed that the join of a collection of relations over an acyclic schema can be efficiently computed using semijoins only. In the aforementioned papers, the notion of a \emph{semijoin program} underlies the main results, where a semijoin program is a finite sequence of assignment statements of the form
$R_i:=  R_i\ltimes R_j$, for some relations $R_i$ and $R_j$. Beeri, Fagin, Maier, and Yannakakis \cite{BeeriFaginMaierYannakakis1983} characterized acyclic  schemas in terms of several different desirable semantic properties. 
Two key such properties are the \emph{\ltgc~}~and the existence of a \emph{full reducer}. Recall that two relations $R_1$ and $R_2$ are consistent if there is a relation $W$ such that the projection of $W$ on the attributes of $R_i$ is equal to $R_i$, for $i=1,2$. By definition, a database schema has the \ltgc~if every collection $R_1,\ldots, R_m$ of pairwise consistent relations  over such schemas is globally consistent (i.e.,  there is a relation $W$ whose projection on the attributes of $R_i$ is equal to $R_i$, for $1 \leq i \leq m$). Also by definition, a database schema has a full reducer if there is a semijoin program $\pi$  such that for every collection $R_1,\ldots, R_m$ of relations  over the schema, the output of $\pi$ on input $R_1,\ldots,R_m$ is a globally consistent collection $R_1^*,\ldots,R_m^*$
 of relations. The main result in Beeri et al.\ \cite{BeeriFaginMaierYannakakis1983}
asserts that the following are equivalent for a schema $H$: (i) $H$ is acyclic; (ii)  $H$ has the \ltgc; (iii) H has a full reducer.

During the past two decades, a body of research on annotated databases has been developed, where in an annotated database   each tuple in a relation is annotated with a value from some algebraic structure. Following the influential work on database provenance by Green, Karvounarakis, and Tannen \cite{DBLP:conf/pods/GreenKT07},  several different aspects of annotated databases have been investigated,  including 
the study of conjunctive query containment for annotated databases \cite{DBLP:journals/mst/Green11,DBLP:journals/tods/KostylevRS14}, the evaluation of Datalog programs on annotated databases \cite{DBLP:journals/jacm/KhamisNPSW24},
and the study of the semantics of first-order logic  on annotated databases \cite{DBLP:journals/corr/abs-1712-01980,DBLP:journals/corr/abs-2412-07986}. In this framework,  the annotations are values in the universe of    some fixed semiring ${\mathbb K}=(K,+, \times, 0,1)$; this is a  simultaneous generalization of  standard relational databases, where the annotations are $1$  and $0$,  and of  bag databases, where the  annotations are non-negative integers denoting  multiplicities of tuples. The term $\mathbb K$-relations refers to relations with annotations from a semiring $\mathbb K$; thus, standard relations are $\mathbb B$-relations, where $\mathbb B$ is the Boolean semiring, while bags are $\mathbb N$-relations, where $\mathbb N$ is the semiring of non-negative integers.

Atserias and Kolaitis \cite{AK25} studied the interplay between local consistency and global consistency for  annotated relations by addressing the following question: do the aforementioned results by Beeri et al. \cite{BeeriFaginMaierYannakakis1983}  characterizing acyclicity in terms of the \ltgc~extend  to annotated relations? Since the definition of consistency of  annotated relations uses only the projection operation on relations and since projection is defined using only addition $+$, they considered $\mathbb K$-relations where  the annotations come from a monoid ${\mathbb K=(K, +, 0)}$.
 They then identified the following condition on monoids:  a monoid $\mathbb K$ has the \emph{inner consistency property} if whenever two $\mathbb K$-relations are inner consistent, then they are also consistent, where two $\mathbb K$-relations $R$ and $T$ are inner consistent if the projection of $R$ on the common attributes of $R$ and $T$ coincides with the projection of $T$ on the common attributes of $R$ and $T$ (note that consistency always implies inner consistency).
  The first main result in \cite{AK25} asserts  that the following is true for every  positive commutative monoid $\mathbb K$ that has the inner consistency property:  a schema $H$ is acyclic if and only if $H$ has
 the \ltgc~for $\mathbb K$-relations. The second main result in \cite{AK25} asserts that a positive commutative monoid $\mathbb K$ has the inner consistency property if and only if every acyclic schema has the \ltgc~for $\mathbb K$. 
 In addition to the Boolean monoid and the bag monoid,  a plethora of other monoids turned out to  have the inner consistency property, including all monoids arising in the context of database provenance.

\smallskip

\noindent{\bf Summary of Results}~
In \cite{AK25}, the following   problem was raised: do  the results by Beeri et al. \cite{BeeriFaginMaierYannakakis1983} characterizing  acyclicity in terms of the existence of a full reducer   extend and how do they extend to annotated relations? As stated in \cite{AK25}, the main difficulty is whether a suitable notion of semijoin can be defined on monoids.  Our goal here is to resolve this open problem.

When one first considers this problem, one 
may attempt to define a semijoin operation on $\mathbb K$-relations as  the projection of the ``join'' of two $\mathbb K$-relations $R$ and $T$ on the attributes of $R$, where ${\mathbb K}=(K,+,0)$ is a monoid. 
There are, however, two issues with this approach. First, to define the ``join'' of two $\mathbb K$-relations, one needs both an addition operation $+$ and a multiplication operation $\times$ on elements of $K$. Second, even if the monoid $\mathbb K=(K,+,0)$ has an expansion to a semiring
$(K,+,\times, 0,1)$,  the join operation arising from $+$ and $\times$  need not yield a ``good'' semijoin operation. A striking case in point is the bag semiring ${\mathbb N}=(N,+,\times, 0,1)$ of non-negative integers: as shown in \cite{DBLP:conf/pods/AtseriasK21}, there are consistent bags $R$ and $T$ such that the projection of the bag-join $R\Join T$ on the attributes of $R$ is different from $R$.

To overcome these issues, we adopt an axiomatic approach, that is, we introduce the notion of a \emph{semijoin function  on a monoid ${\mathbb K}=(K,+,0)$} by specifying four structural properties that such a function $\sj$  must obey (see Definition \ref{defn:semijoin} for the details). The first property stipulates that if $R$ and $T$ are two consistent $\mathbb K$-relations, then $\sj(R,T)=R$. 
The other three properties make use of the canonical preorder $\sqsubseteq$ of a monoid, where for all  $a$ and $b$ in $K$, we have that $a\sqsubseteq b$ if there is 
some $c$ in $K$ such that $a+c=b$. Note that all four properties are satisfied by the semijoin of standard relations; thus, our notion of a semijoin function on a monoid generalizes indeed the semijoin operation in relational algebra.

Once the notion of a semijoin function is in place, we explore the existence of semijoin functions. To this effect, we identify a property of monoids, called the \emph{production property}, which has an operations-research character (see the comments following Definition \ref{dfn:frugal}). We then show that a positive commutative monoid $\mathbb K$ has a semijoin function if and only if $\mathbb K$ has the production property.  This result has several fruitful consequences. On the positive side, we show that if $\mathbb K$ has the inner consistency property, then $\mathbb K$ has the production property, hence $\mathbb K$ also has a semijoin function. On the negative side, we show that no numerical semigroup other than  the bag monoid ${\mathbb N}=(N,+,0)$ 
has the \fpp, hence no numerical semigroup other than $\mathbb N$ has a semijoin function (a numerical semigroup is a submonoid of $\mathbb N$ whose universe consists of all but finitely many non-negative integers \cite{assi2020numerical}). Further, we use the production property to show that the following problem is solvable in polynomial-time: given a positive commutative monoid $\mathbb K$, does $\mathbb K$ have a semijoin function?

Semijoin functions on monoids make it possible to give semantics to semijoin programs over monoids.  In turn, this gives rise to the notion of a \emph{full reducer} for a schema $H$ on a monoid $\mathbb K$, i.e., a pair $(\pi, \sj)$ consisting of a semijoin program $\pi$ involving $H$ and a semijoin function $\sj$ on $\mathbb K$ such that for every collection $R_1,\ldots,R_m$ of $\mathbb K$-relation over $H$, the resulting collection $R_1^*,\ldots,R_m^*$ of $\mathbb K$-relations over $H$ obtained by applying $\pi$ and $\sj$ is globally consistent. Our main result asserts that the following is true for every positive commutative monoid $\mathbb K$ that has the inner consistency property: a schema $H$ is acyclic if and only if $H$ has a full reducer on $\mathbb K$. This generalizes the results in Beeri et al.\ \cite{BeeriFaginMaierYannakakis1983}  and yields a host of new results about annotated relations, including new results about bags, since the bag monoid $\mathbb N$ has inner consistency property. Furthermore, we show that if $H$ is an acyclic schema, then a single semijoin program $\pi$ gives rise to a full reducer $(\pi,\sj)$ for every semijoin function $\sj$ on any positive commutative monoid $\mathbb K$ that has the inner consistency property.  

Finally, by combining the preceding results with results in \cite{AK25}, we obtain a new characterization of the inner consistency property by showing that a positive commutative monoid $\mathbb K$ has the inner consistency property if and only if every acyclic schema has a full reducer on $\mathbb K$.

\commentout{
Annotated databases are databases in which each fact in a relation is annotated with a value from some algebraic structure. Starting with the influential work on database provenance \cite{DBLP:conf/pods/GreenKT07,DBLP:journals/sigmod/KarvounarakisG12}, there has been an extensive investigation of several different aspects of annotated databases, including 
the study of conjunctive query containment for annotated databases \cite{DBLP:journals/mst/Green11,DBLP:journals/tods/KostylevRS14} and  the evaluation of Datalog programs on annotated databases \cite{DBLP:journals/jacm/KhamisNPSW24}. In these investigations, the annotations are values  from  some fixed semiring ${\mathbb K}=(K,+, \times, 0,1)$. Thus, standard relational databases are annotated databases  in which the annotations are $1$ (true) and $0$ (false), while bag databases are annotated databases in which the  annotations are non-negative integers denoting the multiplicities. This framework, which is often referred to as \emph{semiring semantics}, has spanned  first-order logic \cite{DBLP:journals/corr/abs-1712-01980} and least fixed-point logic
\cite{DBLP:conf/csl/DannertGNT21}.

During the early days of relational database theory it was realized that ``acyclic'' database schemas possess a number of desirable  semantic  properties. In fact, three different notions of ``acyclicity'' were identified and extensively investigated during the 1980s, namely,  acyclicity (also known as $\alpha$-acyclicity), $\beta$-acyclicity,  and $\gamma$-acyclicity. On undirected graphs (equivalently, on database schemas consisting of 
binary relation symbols only) these notions coincide with the notion of an acyclic graph, but they form a strict hierarchy on hypergraphs (equivalently, on arbitrary database schemas) with $\beta$-acyclicity  being a stricter notion than acyclicity, and $\gamma$-acyclicity  being a stricter notion than  $\beta$-acyclicity.

The study of acyclic schemas was initiated by Yannakakis, who focused on the evaluation of acyclic joins \cite{DBLP:conf/vldb/Yannakakis81}.  After this, Fagin, Beeri, Maier, and Yannakakis \cite{BeeriFaginMaierYannakakis1983}
showed that acyclic schemas are precisely the ones possessing the \emph{\ltgc}, that is, every collection of pairwise consistent relations $R_1,\ldots, R_m$ over such schemas is globally consistent (i.e.,  there is a relation $T$ whose projection on the attributes of $R_i$ is equal to $R_i$, for $1 \leq i \leq m$). Fagin et al.\ \cite{BeeriFaginMaierYannakakis1983} also characterized acyclicity in terms of the existence of \emph{monotone sequential join expressions}, i.e., expressions of the form 
$(((\cdots (R_1\Join R_2) \Join \cdots )\Join R_{m-1} )\Join R_m)$ with the property that if the relations $R_1,\ldots,R_m$ are pairwise consistent, then every intermediate sequential join expression $((\cdots (R_1\Join R_2) \Join \cdots )\Join R_{i-1})$ produces a relation that is consistent with the relation $R_i$.
Results about acyclicity yield results about $\beta$-acyclicity, since a schema is $\beta$-acyclic if and only if every sub-schema of it is acyclic. Fagin \cite{DBLP:journals/jacm/Fagin83} 
studied  $\gamma$-acyclicity  and showed that a schema is $\gamma$-acyclic if and only if every \emph{connected} sequential join expression is monotone. Intuitively,  this means  that every sequential join expression is monotone, provided no join between relations with disjoint sets of attributes is allowed.

Atserias and Kolaitis \cite{AK25} studied the interplay between local consistency and global consistency for  annotated relations. Since the definition of consistency of  annotated relations uses only the projection operation on relations and since projection is defined using only addition $+$, they considered $\mathbb K$-relations where  the annotations come from a monoid ${\mathbb K=(K, +, 0)}$.
 They identified a condition on monoids,  called the 
 \emph{inner consistency property}, and showed that a positive monoid ${\mathbb K}=(K,+,0)$ has the 
 inner consistency property
 if and only if every acyclic schema $H$ has the \ltgc~for $\mathbb K$-relations (i.e., every pairwise consistent collection of $\mathbb K$-relations over $H$ is globally consistent). It was not clear, however, whether  the results about acyclic schemas and sequential join expressions in \cite{BeeriFaginMaierYannakakis1983} can be extended to annotated relations, since, as shown in \cite{DBLP:conf/pods/AtseriasK21}, the analog of the standard join for bags need not  be a witness to the consistency of two consistent bags.
 In a subsequent paper, Atserias and Kolaitis \cite{DBLP:journals/sigmod/AtseriasK25} 
introduced the notion of a \emph{consistency witness function} on a positive monoid ${\mathbb K}$, which is a function $W$ that, given two $\mathbb K$-relations $R$ and $S$, returns a $\mathbb K$-relation $W(R,S)$ that is a consistency witness for $R$ and $S$, provided that $R$ and $S$ are consistent $\mathbb K$-relations. They also introduced the notion of a \emph{monotone sequential \cjoin~expression}, which is analogous to that of a monotone sequential join expression with some arbitrary consistency witness function in place of the standard join. Using these notions, it was shown in \cite{DBLP:journals/sigmod/AtseriasK25} that the characterization of acyclicity in terms of monotone sequential join expressions in \cite{BeeriFaginMaierYannakakis1983} extends  to  characterizations of acyclicity in terms of monotone sequential \cjoin~expressions on monoids  having  the
inner consistency property;
furthermore, the 
inner consistency property
itself can be characterized in such terms.

Here, we investigate $\gamma$-acyclic schemas and establish that
 the desirable semantic properties of $\gamma$-acyclic schemas extend to annotated relations.
 The two main results  are as follows:
    \begin{enumerate}
\item If $\mathbb K$ is a positive commutative monoid and $H$ is a schema 
such that every connected sequential \cjoin-expression over $H$ is monotone on $\mathbb K$ w.r.t.\ \emph{some} consistency witness function on $\mathbb K$, then $H$ is $\gamma$-acyclic.
\item If $\mathbb K$ is a positive commutative monoid that has the 
inner consistency property
and $H$ is a schema which is $\gamma$-acyclic, then 
every connected sequential \cjoin-expression over $H$ is monotone on $\mathbb K$
w.r.t.\ \emph{every} consistency witness function on $\mathbb K$.
    \end{enumerate}
As a byproduct of these two main results, we obtain a characterization of the 
inner consistency property
in terms of $\gamma$-acyclicity and connected sequential \cjoin~expressions. 
 Furthermore, our work sheds light on the role of the join of two standard relations. Specifically, our results reveal that, in the study of the various notions of acyclicity in \cite{DBLP:journals/jacm/Fagin83},
the only relevant property of the join of two standard relations is that it is a witness to the consistency of the two relations, provided  these two relations are consistent. In the setting of  annotated relations, this property of the standard join is captured by the notion of a consistency witness function.

 The rest of the paper is organized as follows. Section 2 contains the definitions of the basic notions, while Section 3 contains the definition of a consistency witness function and related notions. To make the paper as self-contained as possible,  the earlier results about acyclic schemas are summarized in Section 4. Section 5 discusses $\beta$-acyclic schemas. Section~6 contains the main results about $\gamma$-acyclic schemas and annotated relations.
 }
 
\section{Basic Notions} \label{sec:prelims}

\noindent{\bf Monoids}

A \emph{commutative monoid} is a structure $\mathbb{K}=(K,+,0)$, where $+$ is a binary operation on the universe $K$ of $\mathbb K$  that is
 associative, commutative, and has $0$ as its neutral  element, i.e., $p+ 0 = p = 0 + p$ holds for all $p\in K$.  
 A commutative monoid $\mathbb K = (K,+,0)$ is \emph{positive} 
 if for all elements $p,q\in K$ with
$p+q=0$, we have that  $p=0$ and $q=0$.  From now on, we  assume that all commutative monoids considered have at least two elements in their universe.

The following are examples of positive commutative monoids.
\begin{itemize}
\item The \emph{Boolean monoid}
 $\mathbb{B} = (\{0,1\},\vee,0)$ with disjunction $\vee$ as its operation and~$0$~(false) as its neutral element.
 \item The \emph{bag  monoid}  $\mathbb{N}=(N, +, 0)$,  where $N$ is the set of non-negative integers   and $+$ is the standard addition operation. 
 Note that the structure ${\mathbb Z}=(Z,+,0)$, where $Z$ is the set of all integers, is a commutative monoid, but not a positive one. 
 \item A \emph{numerical semigroup} is a submonoid ${\mathbb K}=(K,+,0)$ of  the bag monoid $\mathbb{N}=(N, +, 0)$, such that $K$ is a cofinite set, i.e., the complement $N\setminus K$ is  finite. A concrete example of a numerical semigroup is ${\mathbb K}_{3,5}=(\langle 3,5\rangle, +,0)$, where $\langle 3,5\rangle$ is  the set of all non-negative integers of the form $3m+5n$ with  $m\geq 0$ and $n\geq 0$,  i.e., $\langle 3,5\rangle=\{0,3,5,6,8, 9, 10, \ldots \}$.
  \item The \emph{power set monoid} ${\mathbb P}(A)= (\mathcal{P}(A), \cup,\emptyset)$, where 
  if $A$ is a set, then $\mathcal{P}(A)$ is its powerset, and $\cup$ is the union operation on sets.
 \item 
  The \emph{fuzzy monoid} ${\mathbb V}=([0,1], \max, 0)$, where $[0,1]$ is the interval of all real numbers between $0$ and $1$, and $\max$ is  the standard   maximum operation.

\item The structure ${\mathbb R}^{\geq 0}=([0,\infty), +, 0)$, where $[0,\infty)$
is the set of all non-negative real numbers and $+$ is the standard addition operation.
\item   The structure ${\mathbb T}= ((-\infty,\infty], \min, \infty)$, where $(-\infty,\infty]$ is the set of all real numbers together with $\infty$,  and $\min$ is the standard minimum  operation.
\end{itemize}
\noindent{\bf {$\mathbb K$-relations and marginals of $\mathbb K$-relations}}

An \emph{attribute}~$A$ is a symbol with an associated
set~$\domain(A)$ as its \emph{domain}. If~$X$ is a finite set of
attributes, then~$\tuples(X)$ is the set
of~\emph{$X$-tuples}, i.e., the set of
functions that take each attribute~$A \in X$ to an element of its
domain~$\domain(A)$. $\tuples(\emptyset)$ is non-empty as it
contains the \emph{empty tuple}, i.e., the  function with empty
domain. If~$Y \subseteq X$  and~$t$ is
an~$X$-tuple, then the \emph{projection of~$t$ on~$Y$}, denoted
by~$t[Y]$, is the unique~$Y$-tuple that agrees with~$t$ on~$Y$. In
particular,~$t[\emptyset]$ is the empty tuple.

Let~${\mathbb K} = (K,+,0)$ be a positive commutative monoid and let~$X$ be a finite set
of attributes.
\begin{itemize}
    \item 
A~\emph{$\mathbb{K}$-relation over~$X$} is a
function~$R : \tuples(X) \rightarrow K$ that assigns a value~$R(t)$ in~$K$
to every~$X$-tuple~$t$ in~$\tuples(X)$. 
We will often write $R(X)$ to indicate that $R$ is a $\mathbb K$-relation over $X$, and we will refer to $X$ as the set of attributes of $R$.
If~$X$ is the empty set of attributes, then a~$\mathbb{K}$-relation 
over~$X$ is simply a function that assigns a  single value from~$K$  to the empty tuple. 
Note that the $\mathbb B$-relations are  the standard relations of relational database theory, while the $\mathbb N$-relations are the \emph{bags} or 
\emph{multisets}, i.e., each tuple has a non-negative integer associated with it that denotes the  \emph{multiplicity} of the tuple.
\item 
The \emph{support} $\supp(R)$ of
a~$\mathbb{K}$-relation~$R(X)$ is the set
of~$X$-tuples~$t$ that are assigned non-zero value, i.e.,
$\supp(R) := \{ t \in \tuples(X) : R(t) \not= 0 \}$.
We will often write~$R'$ to
denote~$\supp(R)$. Note that~$R'$ is a standard relation
over~$X$. A~$\mathbb{K}$-relation is \emph{finitely supported} if its support is a
finite set. In this paper, all~$\mathbb{K}$-relations considered will be  finitely supported, 
and we omit the term; thus, from now on, a $\mathbb{K}$-relation is a finitely supported $\mathbb{K}$-relation. 
When~$R'$ is empty, we say that~$R$ is the empty~$\mathbb{K}$-relation over~$X$. 
\item 
If~$Y\subseteq X$, then the \emph{projection} (or the \emph{marginal $R[Y]$}) \emph{of $R$ on $Y$}  is the~$\mathbb{K}$-relation 
over~$Y$ such that for every~$Y$-tuple~$t$, we have that
$R[Y](t) := \sum_{\newatop{r \in R':}{r[Y] = t}} R(r).$

The value $R[Y](t)$ is  the \emph{marginal of $R$ over $t$}. For notational simplicity, we will often write $R(t)$ for the marginal of $R$ over $t$, instead of $R[Y](t)$. It will be clear from the context (e.g., from the arity of the tuple $t$) if $R(t)$ is indeed the marginal of $R$ over $t$ (in which case $t$ must be a $Y$-tuple) or $R(t)$ is the actual value of $R$ on $t$ as a mapping from $\tuples(X)$ to $K$ (in which case $t$ must be an $X$-tuple).
   Note that if $R$ is a standard  relation (i.e., $R$ is a $\mathbb B$-relation), then the projection $R[Y]$ is the standard projection of $R$ on $Y$.
\end{itemize}
   The proof of the next useful proposition follows easily from the definitions.

\begin{proposition} \label{lem:easyfacts1} 
Let $\mathbb K$ be a 
 positive commutative monoid and let $R(X)$ be a  $\mathbb K$-relation. Then the
  following  hold:
  \begin{enumerate} \itemsep=0pt
  \item For all~$Y \subseteq X$, we have~$R'[Y] = R[Y]'$.
  \item For all $Z \subseteq Y \subseteq X$, we have $R[Y][Z] = R[Z]$.
\end{enumerate}
\end{proposition}

\commentout{
\begin{proof}
  For the first part, the inclusion~$R[Y]' \subseteq R'[Y]$ is obvious. For the converse, assume that~$t \in R'[Y]$, so
  there exists~$r$ such that~$R(r) \not= 0$ and~$r[Y] =
  t$. By~\eqref{eqn:marginal} and the positivity of~$\mathbb K$, we have
  that~$R(t) \not= 0$. Hence~$t \in R[Y]'$.  
  
  For the second part, we have
  \begin{equation}
  R[Y][Z](u) = \sum_{\newatop{v \in R[Y]':}{v[Z]=u}} R[Y](v) =
  \sum_{\newatop{v \in R'[Y]:}{v[Z]=u}} \sum_{\newatop{w \in R':}{w[Y]=v}} R(w) =
  \sum_{\newatop{w \in R':}{w[Z]=u}} R(w) = R[Z](u)
\end{equation}
where the first equality follows from~\eqref{eqn:marginal}, the second
follows from the first part of this lemma to replace~$R[Y]'$ by~$R'[Y]$, and
again~\eqref{eqn:marginal}, the third follows from partitioning the
tuples in~$R'$ by their projection on~$Y$, together
with~$Z \subseteq Y$, and the fourth follows from~\eqref{eqn:marginal}
again.
\end{proof}
}

\commentout{
If~$X$ and~$Y$ are sets of attributes, then we write~$XY$ as
shorthand for the union~$X \cup Y$. 

Accordingly, if~$x$ is
an~$X$-tuple and~$y$ is a~$Y$-tuple such
that~$x[X \cap Y] = y[X \cap Y]$, then we write~$xy$ to denote
the~$XY$-tuple that agrees with~$x$ on~$X$ and on~$y$ on~$Y$.  We
say that~\emph{$x$ joins with~$y$}, and that~\emph{$y$ joins
  with~$x$}, to \emph{produce} the tuple~$xy$.}
  
\noindent{\bf Schemas and hypergraphs}
\begin{itemize}
\item A \emph{schema} is a sequence~$X_1,\ldots,X_m$ of non-empty sets of attributes.  
\item A \emph{hypergraph} is a pair $H=(V,F)$, where $V$ is a finite non-empty set and $F$ is a set of non-empty subsets of $V$. We call $V$ the set of the \emph{nodes} of $H$ and we call $F$ the set of the \emph{hyperedges} of $H$.
\end{itemize}
A schema~$X_1,\ldots,X_m$ can be identified with  the hypergraph $H=(\bigcup_{i=1}^m X_i,\{X_1,\ldots,X_m\})$, i.e.,  the nodes of $H$  are the attributes  and  the hyperedges of $H$ are the members~$X_1,\ldots,X_m$ of the schema.
In what follows, 
we will use the terms \emph{schema} and
\emph{hypergraph} interchangeably.
\begin{itemize}
\item A  \emph{collection of
$\mathbb{K}$-relations} over  a  schema~$X_1,\ldots,X_m$
is a sequence $R_1(X_1),\ldots,R_m(X_m)$  such that each~$R_i(X_i)$ is a~$\mathbb{K}$-relation over~$X_i$.
\end{itemize}

\section{Background on Consistency and Acyclicity} \label{sec:acyclic}

Let $\mathbb K=(K,+,0)$ be a positive commutative monoid.
\begin{itemize}
\item Two $\mathbb K$-relations $R(X)$ and $T(Y)$ are \emph{consistent} if there is a $\mathbb K$-relation $W(X\cup Y)$ with $W[X]=R$ and $W[Y]=T$.
Such a $\mathbb K$-relation $W$ is a \emph{consistency witness} for $R$ and $T$.
\item A collection $R_1(X_1),\ldots,R_m(X_m)$ of
$\mathbb{K}$-relations over a  schema
$X_1,\ldots,X_m$ is \emph{globally consistent} if there is  a $\mathbb K$-relation $W(X_1\cup \ldots \cup  X_m)$ such that
$W[X_i]=R_i$, for $i$ with $1\leq i\ \leq m$. Such a $\mathbb K$-relation $W$ is  a \emph{consistency witness} for $R_1,\ldots,R_m$.
\end{itemize}
Note that if $R_1(X_1),\ldots,R_m(X_m)$
is a globally consistent collection of $\mathbb K$-relations, then these relations
are pairwise consistent. Indeed, if $W$ is a consistency witness for
$R_1(X_1),\ldots,R_m(X_m)$, then for all $i$ and $j$ with $1\leq i,j \leq m$, we have that the $\mathbb K$-relation $W[X_i \cup X_j]$ is a consistency witness for $R_i$
and $R_j$, because 
$$
R_i =W[X_i]=W[X_i \cup X_j][X_i] \quad \mbox{and} \quad  
R_j  =W[X_j]=W[X_i \cup X_j][X_j],$$
where, in each case, the first equality follows from the definition of global consistency and the second equality follows from the second part of Proposition \ref{lem:easyfacts1}. 
The converse  fails even for standard relations (i.e., $\mathbb B$-relations). 
For example, consider the
\emph{triangle} schema $\{A,B\}, \{B,C\}, \{C,A\}$ and the  standard relations $R_1(A,B)=\{ (0,0), (1,1)\}$, $R_2(B,C)=\{ (0,1),(1,0)\}$,
$R_3(C,A)=\{(0,0), (1,1)\}$. It is easy to check that $R_1, R_2, R_3$ are  pairwise consistent; however, they are not globally consistent since if they were, then their join $((R_1\Join R_2)\Join R_3)$ would be a consistency witness for them, but $((R_1\Join R_2)\Join R_3) = \emptyset$. 

Beeri et al.\ \cite{BeeriFaginMaierYannakakis1983} characterized the schemas for which every collection of pairwise consistent standard relations is globally consistent by showing that these are precisely the \emph{acyclic} schemas (also known as $\alpha$-\emph{acyclic} schemas). Intuitively, the notion of an acyclic hypergraph generalizes to hypergraphs the property that
 a graph is acyclic if and only if every connected component of it  with at least two edges has an articulation point. We will not give the precise definition of acyclicity here because we will not use it in the sequel. Instead, we will give the precise definitions of two different notions that turned out to be equivalent to acyclicity.

\newpage 

Let $H$ be a hypergraph with $X_1,\ldots,X_m$ as its hyperedges. 
\begin{itemize}
\item $H$ has the \emph{running intersection property} if there is an ordering $Y_1,\ldots,Y_m$ of the hyperedges of $H$ such that for every $i\leq m$, there is a $j<i$ such that $(Y_1\cup \cdots \cup  Y_{i-1})\cap Y_i \subseteq Y_j$.
\item $H$ has the \emph{\ltgc~for standard relations}  if every collection \\$R_1(X_1),\ldots,R_m(X_m)$   of pairwise consistent standard relations over $H$ is also globally consistent.
\end{itemize}


\begin{theorem}[\cite{BeeriFaginMaierYannakakis1983}] \label{thm:BFMY}
For every hypergraph $H$, the following statements are equivalent:
\begin{enumerate}
\item $H$ is acyclic.
\item 
$H$ has the running intersection property.
\item $H$ has the \ltgc~for standard relations.
\end{enumerate}
\end{theorem}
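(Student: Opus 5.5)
Since the paper does not fix a definition of acyclicity, I would take the standard one, \emph{GYO-reducibility}: $H$ is acyclic if repeatedly applying the two GYO rules empties it. The rules are (a) delete a node that occurs in exactly one hyperedge, and (b) delete a hyperedge contained in another hyperedge. I would then prove the cycle $1\Rightarrow 2\Rightarrow 3\Rightarrow 1$.

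For $1\Rightarrow 2$, I would run the GYO reduction and record, for each hyperedge, the step at which it disappears.
\begin{itemize}
\item Disappearance is either by rule (b), absorbed into some $X_j$, or because it becomes the last remaining edge.
\item Ordering the hyperedges in \emph{reverse} order of disappearance gives $Y_1,\ldots,Y_m$.
\item When $Y_i$ is removed, every node it shares with the earlier edges $Y_1,\ldots,Y_{i-1}$ is still alive, since such a node lies in another live edge and so could not have been deleted by rule (a).
\item The surviving part of $Y_i$ is contained in the absorbing edge $Y_j$ with $j<i$.
\item Hence $(Y_1\cup\cdots\cup Y_{i-1})\cap Y_i\subseteq Y_j$, which is the running intersection property.
\end{itemize}
The bookkeeping needs care, because rule (a) shrinks edges before they are absorbed. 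One checks that the nodes deleted from $Y_i$ by rule (a) occurred only in $Y_i$, and therefore lie in no earlier edge.

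For $2\Rightarrow 3$, take pairwise consistent standard relations $R_1,\ldots,R_m$ listed along the running intersection order. I would show by induction on $i$ that $J_i=R_1\Join\cdots\Join R_i$ satisfies $J_i[Y_k]=R_k$ for all $k\le i$.
\begin{itemize}
\item Write $Z=(Y_1\cup\cdots\cup Y_{i-1})\cap Y_i\subseteq Y_j$.
\item By the induction hypothesis and pairwise consistency of $R_j$ and $R_i$, we get $J_{i-1}[Z]=R_j[Z]=R_i[Z]$.
\item Since $Z$ is exactly the set of shared attributes of $J_{i-1}$ and $R_i$, these two relations are inner consistent.
\item For standard relations, inner consistency implies that the join projects back onto both factors.
\end{itemize}
Hence $J_m$ is a consistency witness for all of $R_1,\ldots,R_m$.

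For $3\Rightarrow 1$, which I expect to be the main obstacle, I would argue by contraposition. If $H$ is not acyclic, take the GYO-irreducible core $H^*\neq\emptyset$. In $H^*$ every node lies in at least two edges and no edge contains another. The first thing I would try is a parity construction.
\begin{itemize}
\item Give every attribute the domain $\{0,1\}$ and fix one edge $X_1$ of the core.
\item For each hyperedge $X$, let $R_X$ be the set of $X$-tuples whose coordinate sum restricted to $X^*=X\cap V(H^*)$ is even, except that for $X_1$ we require it to be odd.
\item Pairwise consistency: any two such relations agree on their common attributes, as long as each edge has a node outside the intersection. This holds because edges of $H^*$ are incomparable, so the free coordinate can always fix the parity.
\item Global inconsistency: every node of the core lies in at least two edges. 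Summing all the parity constraints over $\mathbb Z_2$ then double-counts nodes but not the total parity, giving $0=1$.
\end{itemize}
If the naive parity argument fails because some node lies in an odd number of edges, I would fall back to a more careful choice of obstruction. Following Beeri et al., I would extract from $H^*$ either a chordless cycle of length $\ge 3$ or a ``clique-like'' substructure. On that substructure I would place the triangle-style example given above, generalised to $k$ edges. All other relations would be padded with full relations projected suitably so that they impose no constraint.
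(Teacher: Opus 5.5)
The paper gives no proof of this theorem; it simply cites Beeri et al. Judged on its own merits, your $1\Rightarrow 2$ (reverse GYO elimination order) and $2\Rightarrow 3$ (inductive join along the running-intersection order) are correct. The gap is in $3\Rightarrow 1$, which is the whole difficulty: your parity construction is not, in general, a pairwise consistent but globally inconsistent collection.

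\textbf{First}, global inconsistency can actually fail, not just go unproved. Your summation argument needs $X_1$ to lie in an \emph{even cover}, i.e., a set of core edges covering every core node an even number of times, and a GYO core need not contain one through a given edge. Take the core made of the triangles $\{1,2\},\{2,3\},\{1,3\}$ and $\{4,5\},\{5,6\},\{4,6\}$ joined by $\{3,4\}$. Every node has degree $\ge 2$ and there are no containments. With $X_1=\{3,4\}$, the relation $\{(a,a,a,1-a,1-a,1-a): a\in\{0,1\}\}$ is a global witness. In the four $3$-subsets of a $4$-set, the incidence matrix is invertible over $\mathbb{Z}_2$, so there is no even cover at all. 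The parity relations there are inconsistent only because the join is a single tuple while each relation has four tuples, an argument you do not give.

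\textbf{Second}, your lifting to non-core edges breaks pairwise consistency. Suppose $X$ was removed by rule (b) and $\emptyset\neq X\cap V(H^*)\subsetneq E$ for a core edge $E$. Then $R_X$ fixes a parity on $X\cap V(H^*)$, while $R_E$ projects onto all tuples there. For the triangle plus a pendant edge $\{A,D\}$, your $R_{\{A,D\}}$ forces $A=0$, but $R_{\{A,B\}}$ allows both values. Non-core edges must instead receive the projection onto $X\cap V(H^*)$ of a core relation containing it.

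The fallback is where the real proof would have to live, and it is only gestured at. Extracting a chordless cycle or an uncovered clique from a GYO-irreducible core is itself the nontrivial theorem that GYO-acyclicity equals chordality plus conformality. You would need to state or prove it. The cycles involved must have length $\ge 4$; triangles belong to the conformality case.

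More seriously, take a minimal uncovered clique $C$ with $|C|=k\ge 4$. The trace of $H$ on $C$ is the complete $(k-1)$-uniform hypergraph, which is not a cycle, so the triangle-style equality/inequality pattern cannot be placed on it. In the four $3$-subsets of $\{1,2,3,4\}$, imposing an odd pattern along the cycle $1$-$2$-$3$-$4$ on the hyperedges containing its edges makes $\{1,2,3\}$ and $\{1,3,4\}$ project differently onto the chord $\{1,3\}$. What works is parity on all $(k-1)$-subsets with one odd edge: for odd $k$ summation gives $0=1$, and for even $k$ the join has exactly one tuple against $2^{k-2}$ tuples per relation.

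Finally, padding with full relations is wrong for any hyperedge that contains a cycle edge $\{a_i,a_{i+1}\}$ or a $(k-1)$-subset of $C$. Such hyperedges must carry that constraint. You then need chordlessness, respectively minimality, to show each hyperedge meets the substructure in at most one such piece, so that the lifted relations are well defined and pairwise consistent.
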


As an illustration of Theorem \ref{thm:BFMY},
the triangle schema $\{A,B\}, \{B,C\}, \{C,A\}$ is cyclic;
therefore, Theorem \ref{thm:BFMY} predicts that the triangle schema does not have the \ltgc~for standard relations, which we showed to be the case earlier. 
For every $n\geq 2$, the  \emph{$n$-path} schema  $P_n$ with hyperedges
$\{A_1,A_2\}, \{A_2,A_3\}, \ldots, \{A_{n},A_{n+1}\}$ is acyclic because it has the running intersection property via this ordering. 
Finally, consider the schema $\{A,B,C\}, \{C,D,E\}, \{E,F,A\}, \{A,C,E\}$. It has the running intersection property via the ordering
$\{A,B,C\}, \{A,C,E\}, \{C,D,E\}, \{E,F,A\}$, hence it is acyclic (and so, by Theorem \ref{thm:BFMY}, it has the \ltgc~property for standard relations).

In \cite{AK25}, the following question was investigated: does  Theorem \ref{thm:BFMY} extend from standard relations to $\mathbb K$-relations, where $\mathbb K$ is an arbitrary positive commutative monoid?
Clearly, the first two statements in Theorem \ref{thm:BFMY} are ``structural'' properties that depend on the hypergraph $H$ only; in contrast, the third statement is a ``semantic'' property, as it also involves standard relations.  The \ltgc~property for standard relations has the following natural generalization to relations over monoids.

Let $\mathbb K$ be a positive commutative monoid.
\begin{itemize}
    \item  We say that a hypergraph $H$ with $X_1,\ldots,X_m$ as its hyperedges  has the \emph{\ltgc~for $\mathbb K$-relations} if every collection $R_1(X_1),\ldots,R_m(X_m)$ of pairwise consistent $\mathbb K$-relations is also globally consistent. 
\end{itemize}
In \cite{AK25}, it was that shown that the acyclicity of a hypergraph $H$ is a necessary, but not always sufficient, condition for $H$ to have the \ltgc~for $\mathbb K$-relations, where $\mathbb K$ is an arbitrary positive commutative monoid. 
Nonetheless, 
it was also shown in \cite{AK25} that   Theorem \ref{thm:BFMY} generalizes to every positive commutative monoid $\mathbb K$ that satisfies
 a condition called the \emph{inner consistency property}.
\begin{itemize}
    \item 
    Two $\mathbb K$-relations $R(X)$ and $T(Y)$ are \emph{inner consistent} if
$R[X\cap Y]=T[X \cap Y]$. 

\item We say that $\mathbb K$ has the \emph{inner consistency property} if whenever two $\mathbb K$-relations are inner consistent,  they  are also consistent.
\end{itemize}
Note that, using Proposition \ref{lem:easyfacts1}, it is easy to verify that if $R$ and $S$ are consistent $\mathbb K$-relations, then they are also inner consistent. 
Consequently, for  monoids with the inner consistency property, the notions of consistency and inner consistency coincide.
In particular, this holds true for the 
the Boolean monoid ${\mathbb B}=(\{0,1\}, \vee, 0)$. 

We can now state one of the main results in \cite{AK25}.
\begin{theorem} [\cite{AK25}]
\label{thm:BFMY-general}
Let $\mathbb K$ be a positive commutative monoid that has the inner consistency property.  For every hypergraph $H$,  
 the following statements are equivalent:
\begin{enumerate}
\item $H$ is acyclic.
\item $H$ has the local-to-global consistency property for $\mathbb K$-relations.
\end{enumerate}

\end{theorem}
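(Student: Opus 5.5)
We prove the two directions separately; only the direction from acyclicity to the local-to-global consistency property uses the inner consistency property.

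\emph{(1) $\Rightarrow$ (2).} Assume $H$ is acyclic. By Theorem~\ref{thm:BFMY}, $H$ has the running intersection property, so we may order its hyperedges as $Y_1,\ldots,Y_m$ such that for each $i\ge 2$ there is $j(i)<i$ with $(Y_1\cup\cdots\cup Y_{i-1})\cap Y_i\subseteq Y_{j(i)}$. Let $R_1,\ldots,R_m$ be pairwise consistent $\mathbb K$-relations over this ordering. By induction on $i$ we construct $W_i$ over $Z_i=Y_1\cup\cdots\cup Y_i$ with $W_i[Y_k]=R_k$ for all $k\le i$.
\begin{itemize}
\item Set $W_1=R_1$.
\item For the inductive step, put $S=Z_{i-1}\cap Y_i\subseteq Y_{j}$, where $j=j(i)$. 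By Proposition~\ref{lem:easyfacts1},
\[
W_{i-1}[S]=W_{i-1}[Y_j][S]=R_j[S].
\]
Since $R_j$ and $R_i$ are consistent, they are inner consistent, so $R_j[Y_j\cap Y_i]=R_i[Y_j\cap Y_i]$. Projecting further onto $S\subseteq Y_j\cap Y_i$ gives $R_j[S]=R_i[S]$.
\item Hence $W_{i-1}$ and $R_i$ are inner consistent. By the inner consistency property they are consistent, so there is a witness $W_i$ over $Z_i$ with $W_i[Z_{i-1}]=W_{i-1}$ and $W_i[Y_i]=R_i$.
\item For $k<i$, Proposition~\ref{lem:easyfacts1} gives $W_i[Y_k]=W_i[Z_{i-1}][Y_k]=W_{i-1}[Y_k]=R_k$.
\end{itemize}
Then $W_m$ witnesses global consistency. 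This direction is routine once the running intersection ordering is available.

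\emph{(2) $\Rightarrow$ (1).} Here we argue by contrapositive: if $H$ is cyclic, we exhibit pairwise consistent but not globally consistent $\mathbb K$-relations over $H$. This is the main obstacle, since the counterexample must be built for an arbitrary positive commutative monoid, not just for $\mathbb B$.

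\begin{itemize}
\item \textbf{Boolean counterexample.} By Theorem~\ref{thm:BFMY}, cyclic $H$ admits pairwise consistent but not globally consistent standard relations. The standard construction (of which the triangle example is a special case) uses parity-type relations over the attributes of a minimal cyclic sub-hypergraph. Attributes outside that core are fixed to a constant, and a hyperedge meeting the core only partially gets the projection of the constructed relation.
\item \textbf{Transfer to $\mathbb K$.} We choose a nonzero $a\in K$, which exists because $|K|\ge 2$. We then annotate each tuple so that all marginals are uniform. In the parity construction, each tuple of a relation over $X_i$ extends to the same number of tuples of any other relation on the overlap. So we annotate every support tuple by $a$ and check that the relevant marginals are equal multiples $n\cdot a$.
\item \textbf{Pairwise consistency in $\mathbb K$.} We take the Boolean pairwise witness, which is itself uniform, and annotate it by $a$ in the same way. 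The hardest check is that its marginals are exactly $R_i$ and $R_j$, rather than merely having matching supports. Regularity of the parity construction should give this.
\item \textbf{Failure of global consistency.} This transfers by positivity. If $W$ were a global $\mathbb K$-witness, then Proposition~\ref{lem:easyfacts1}(1) gives $W'[X_i]=R_i'$. So $W'$ would be a global Boolean witness for the supports, contradicting the Boolean counterexample.
\end{itemize}

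If uniform annotation fails to produce exact pairwise witnesses for some hyperedge pair, we would fall back on reducing to the minimal cyclic cores of the Beeri et al.\ theory. These are either a cycle of length at least $3$ or the hypergraph of all $(n-1)$-subsets of an $n$-set. We would treat each core explicitly with uniform parity relations, where the combinatorics are fully symmetric.
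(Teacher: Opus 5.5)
Your direction (1)$\Rightarrow$(2) is correct. It is the same induction along a running-intersection ordering that the paper carries out in Claim~2 of the proof of Theorem~\ref{thm:full-reducer}; the paper itself does not reprove Theorem~\ref{thm:BFMY-general} but cites \cite{AK25}. Your transfer of global \emph{in}consistency to supports via positivity and Proposition~\ref{lem:easyfacts1}(1) is also fine.

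The gap is the pairwise-consistency half of (2)$\Rightarrow$(1). Annotating every support tuple by one fixed $a\neq 0$ does not yield pairwise consistent $\mathbb K$-relations once the core counterexample is lifted to $H$. The reason is that hyperedges whose traces on the core have different sizes get different masses. Take $H=\{A,B\},\{B,C\},\{C,A\},\{D\}$. The core is the triangle, and each triangle relation has two support tuples. Your relation on $\{D\}$ (the projection onto the empty trace, padded by a constant) has one. Then $R_{AB}[\emptyset]=a+a$ but $R_{D}[\emptyset]=a$, and for $\mathbb K=\mathbb N$, $a=1$, these two relations are not even inner consistent. The witnesses have the same problem. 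If the ``Boolean pairwise witness'' is the join, then for two disjoint edges of a cycle $C_n$, $n\ge 4$, it is a product of two $2$-tuple relations. Annotating it by $a$ gives every tuple marginal $a+a$ instead of $a$. So ``regularity of the parity construction should give this'' is exactly the unproved step, and it is false for the natural choices. Your fallback to the cores does not repair it, since the failure occurs in the lifting.

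What is needed is to choose multiplicities so that marginals match exactly. The cleanest route is to build an $\mathbb N$-counterexample and push it forward along $n\mapsto n\cdot a$. This map is a monoid homomorphism $\mathbb N\to\mathbb K$, it commutes with marginals, and by positivity it preserves supports.

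In $\mathbb N$, give the core relations multiplicity $1$ and use core pairwise witnesses that are fiberwise bijections over the overlap. For adjacent edges of $C_n$, and for any two edges of the $(n-1)$-subsets core, every overlap tuple extends uniquely on both sides. For disjoint edges of $C_n$, use a perfect matching.

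Define the relation on a hyperedge $X_i$ as the \emph{bag marginal} $R_E[X_i\cap S]$ of any core relation $R_E$ with $X_i\cap S\subseteq E$, padded by constants. This is independent of $E$ by pairwise consistency of the core, and pairwise witnesses are obtained by marginalizing core witnesses. In the example above, $\{D\}$ then gets multiplicity $2$. Global inconsistency follows as you say, because every core edge is a trace $X_i\cap S$, so a global witness would restrict to one for the core.

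Finally, plain mod-$2$ parity fails on the core of all $(n-1)$-subsets of an $n$-set when $n$ is even. For $n=4$, $x=(1,1,1,0)$ satisfies three even constraints and one odd one. Use sums modulo $n-1$ there, so that adding all $n$ constraints gives $0\equiv 1\pmod{n-1}$.
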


The bag monoid ${\mathbb N}=(N,+,0)$ was shown to have  the inner consistency property
in \cite{DBLP:conf/pods/AtseriasK21}.
 Several other types of  monoids were shown to  have the inner consistency property in  \cite{AK25}; concrete examples include  the monoids
${\mathbb V}=([0,1], \max, 0)$, ${\mathbb R}^{\geq 0}=([0,\infty), +, 0)$,
${\mathbb T}= ((-\infty,\infty], \min, \infty)$, and the power set monoids ${\mathbb P}(A)= (\mathcal{P}(A), \cup,\emptyset)$, for every set $A$.  In contrast,  no numerical semigroup other than the bag monoid ${\mathbb N}=(N,+,0)$ has the inner consistency property; in particular, the monoid  ${\mathbb K}_{3,5}=(\langle 3,5\rangle, +,0)$ lacks this property.

As shown in \cite{AK25}, the inner consistency property actually characterizes the monoids $\mathbb K$ for which every acyclic hypergraph has the \ltgc~for~$\mathbb K$-relations.

\begin{theorem} [\cite{AK25}]
\label{thm:TP}
 Let $\mathbb K$ be a  positive commutative monoid. Then the following statements are equivalent:
\begin{enumerate} 

\item $\mathbb K$ has the inner consistency property.

\item Every acyclic hypergraph has the \ltgc~for $\mathbb K$-relations.

\item The $3$-path hypergraph $P_3$ has the \ltgc~for $\mathbb K$-relations.

\end{enumerate}

\end{theorem}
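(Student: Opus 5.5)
The proof will go around the cycle of implications (1)$\Rightarrow$(2)$\Rightarrow$(3)$\Rightarrow$(1). The implication (1)$\Rightarrow$(2) is exactly Theorem~\ref{thm:BFMY-general}: acyclicity implies the \ltgc~for $\mathbb K$-relations whenever $\mathbb K$ has the inner consistency property, and this is what (2) asserts. For (2)$\Rightarrow$(3) we only need that $P_3$ is acyclic, which was noted after Theorem~\ref{thm:BFMY}: it has the running intersection property via the ordering $\{A_1,A_2\},\{A_2,A_3\},\{A_3,A_4\}$.

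The real work is (3)$\Rightarrow$(1). Suppose $R(X)$ and $T(Y)$ are inner consistent $\mathbb K$-relations; we must show they are consistent. The plan is to encode them as a $3$-path instance. Put $Z=X\cap Y$ and choose three fresh groups of attributes for $P_3=\{A_1,A_2\},\{A_2,A_3\},\{A_3,A_4\}$. Let $A_1$ range over $(X\setminus Z)$-tuples and $A_2$ over $Z$-tuples, so that $R$ becomes a relation $R_1(A_1,A_2)$. Let $A_4$ range over $(Y\setminus Z)$-tuples and $A_3$ again over $Z$-tuples, so that $T$ becomes $R_3(A_3,A_4)$. For the middle edge we use the ``diagonal'' relation $R_2(A_2,A_3)$ with $R_2(z,z)=R[Z](z)=T[Z](z)$ and $R_2$ equal to $0$ off the diagonal. (If a single attribute per group is undesirable, one can equally take $A_i$ to be sets of attributes, or rename attributes; the schema is still the $3$-path up to merging attributes, which does not affect the argument.)

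Next we check pairwise consistency. $R_1$ and $R_2$ have the witness $W(a_1,z,z)=R(a_1z)$, which is $0$ elsewhere. Its projection to $\{A_2,A_3\}$ gives the diagonal with the values $R[Z](z)$, and this uses inner consistency. Symmetrically, $R_2$ and $R_3$ are consistent. $R_1$ and $R_3$ share no attributes, so consistency there amounts to a witness over disjoint attribute sets. The key point is that the $3$-path only needs consistency of adjacent pairs if we define ``pairwise'' on the edges. But the \ltgc~requires all pairs, so we must also produce a witness for $R_1,R_3$ with disjoint attributes. This is the step I expect to be the main obstacle: consistency of two $\mathbb K$-relations over disjoint attributes is equivalent to having equal total mass $R[\emptyset]=T[\emptyset]$, together with the existence of a ``transportation'' matrix with prescribed row and column sums. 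That is itself an instance of inner consistency, so it is circular.

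To avoid the circularity I would instead make the end relations share nothing problematic. Either I would add a common attribute, or, more simply, I would check whether the cited result from \cite{AK25} already records that disjoint-attribute consistency follows from equal totals in positive commutative monoids. If it does not, I would modify the encoding so that the outer edges are the inner-consistency instance. Take $R_1=R$ on $(X\setminus Z)\cup Z$, $R_3=T$ on $Z'\cup(Y\setminus Z)$ with $Z'$ a renamed copy, and $R_2$ diagonal on $Z\cup Z'$; this is the $3$-path after grouping. A witness for $R_1,R_3$ is then needed over disjoint attribute sets. My fallback for producing it is a greedy ``northwest corner'' construction using only positivity and commutativity, which I suspect fails in general. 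If it does fail, I would conclude that (3)$\Rightarrow$(1) needs a proof that global consistency of the three relations directly yields a witness $W$ over $X\cup Z'\cup(Y\setminus Z)$; restricting $W$ to the diagonal $Z=Z'$ (which carries all the mass, by positivity and the diagonal support of $R_2$) and identifying $Z'$ with $Z$ then gives a consistency witness for $R$ and $T$.
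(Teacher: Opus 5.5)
\paragraph{Verdict: genuine gap in (3)$\Rightarrow$(1).}
The paper gives no proof of this statement; it imports the theorem from \cite{AK25}, so I assess your argument directly. Your (1)$\Rightarrow$(2) and (2)$\Rightarrow$(3) are fine. In (3)$\Rightarrow$(1), the witnesses for the adjacent pairs and the final extraction step are also fine. The gap is exactly where you suspect it.

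Hypothesis (3) can only be applied once $R_1,R_2,R_3$ are \emph{pairwise} consistent. With your encoding ($R$ at one end, $T$ at the other, diagonal in the middle), the end relations live on disjoint attributes. Their consistency is therefore an instance of the transportation property, which is equivalent to (1), the very thing being proved. This is not a technicality. In ${\mathbb N}_2$ one has $1\oplus 1=1\oplus 2$, yet no $2\times 2$ matrix over ${\mathbb N}_2$ has row sums $1,1$ and column sums $1,2$. So equal totals do not yield a witness, and the same happens in every numerical semigroup other than $\mathbb N$.

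Consequently none of your fallbacks can work:
\begin{itemize}
\item A northwest-corner construction would establish the transportation property for every positive commutative monoid, which is false.
\item Adding a common attribute to the two end edges changes the hypergraph (it even becomes cyclic), so (3) says nothing about it.
\item Your last paragraph puts the difficulty in the wrong place. Pulling a witness for $R,T$ out of a global witness is the easy part; what is missing is the hypothesis needed to obtain a global witness at all.
\end{itemize}

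\paragraph{The missing idea.}
Make the two end relations consistent for a trivial reason: let each of them carry \emph{both} $R$ and $T$ on two tagged sheets, with the roles swapped. With $Z=X\cap Y$, take $\mathrm{Dom}(A_2)=\mathrm{Dom}(A_3)=\{0,1\}\times\mathrm{Tup}(Z)$ and $\mathrm{Dom}(A_1)=\mathrm{Dom}(A_4)=\mathrm{Tup}(X)\sqcup\mathrm{Tup}(Y)$. Put
$R_1(x,(0,x[Z]))=R(x)$ and $R_1(y,(1,y[Z]))=T(y)$;
$R_3((0,y[Z]),y)=T(y)$ and $R_3((1,x[Z]),x)=R(x)$;
$R_2((i,z),(i,z))=R[Z](z)=T[Z](z)$ for $i=0,1$;
and $0$ everywhere else.

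The adjacent pairs are consistent by your diagonal argument, with inner consistency used on both sheets. Now $R_1,R_3$ are consistent via the one-to-one matching $W(x,(0,x[Z]),(1,x[Z]),x)=R(x)$, $W(y,(1,y[Z]),(0,y[Z]),y)=T(y)$. This needs no transportation, because the two end relations carry the same annotations.

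Applying (3) gives a global witness $W$. Positivity together with the diagonal support of $R_2$ forces $a_2=a_3$ on the support of $W$. Then $V(x\cup y)=W(x,(0,z),(0,z),y)$, for $x[Z]=y[Z]=z$, satisfies $V[X]=R$ and $V[Y]=T$. This is precisely your extraction step, restricted to sheet $0$; sheet $1$ exists only to make the outer pair consistent.
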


\section{Semijoin Functions}
By definition, the \emph{semijoin} $R \ltimes T$ of two standard relations $R(X)$
and $T(Y)$ is the standard relation over $X$ consisting of all tuples in $R(X)$ that join with at least one tuple from $T(Y)$. Thus, $R\ltimes T = (R\Join T) [X]$, that is to say, the semijoin of $R$ and $T$ is the projection of the join of $R$ with $T$ on the set of attributes of $R$.

In this section, we address the question: is there a notion of a semijoin of two $\mathbb K$-relations $R(X)$ and $T(Y)$, where $\mathbb K$ is a positive commutative monoid? To this effect, 
 we introduce the notion of a \emph{semijoin function on $\mathbb K$} by identifying certain desirable properties that such a function ought to obey. We  give  an  explicit  semijoin function for a certain class of monoids and then
 characterize the positive commutative monoids for which a semijoin function exists. Furthermore, we
 prove that if a positive commutative monoid $\mathbb K$ has the inner consistency property, then a semijoin function on $\mathbb K$ always exists.

 Before proceeding,  we need to bring to the front the notion of the canonical pre-order of a monoid.
Let ${\mathbb K}=(K,+,0)$ be a positive commutative monoid.
Consider the
binary relation~$\sqsubseteq$ on~$K$ defined as follows: for all~$a,b \in K$, we have that $a \sqsubseteq b$ if and only if there exists some~$c\in K$ such that~$a
+ c = b$. It is easy to see that~$\sqsubseteq$ is a reflexive and
transitive relation; hence, it is  a pre-order, called the \emph{canonical pre-order} on $\mathbb K$ (see, \cite[Section 3.3]{gondran2008graphs}).

Let $R(X)$ and $S(X)$ be two
$\mathbb K$-relations over the  set $X$. We write $R\sqsubseteq S$ to denote that 
$R(t)\sqsubseteq S(t)$ holds,
for every $X$-tuple $t$. 
Clearly, this is a pre-order on $\mathbb K$-relations over $X$.

We are now ready to introduce the notion of a semijoin function on a monoid.

\begin{definition}\label{defn:semijoin} Let ${\mathbb K}=(K,+,0)$ be a positive  commutative monoid. A \emph{semijoin function on $\mathbb K$} is a function $\sj$  that takes as arguments two $\mathbb K$-relations $R(X)$ and $T(Y)$, and returns as value   a $\mathbb K$-relation $\sj(R,T)$ over $X$ with the following four properties:
\begin{description} 
    \item [\emph{(P1)}] If $R$ and $T$ are consistent, then
    $\sj(R,T) = R$.
    \item [\emph{(P2)}] $\sj(R,T)\sqsubseteq R$.
     \item [\emph{(P3)}] $\sj(R,T)[X\cap Y] \sqsubseteq T[X\cap Y]$.
     \item [\emph{(P4)}] If $T[X\cap Y]\sqsubseteq R[X\cap Y]$, then 
     $\sj(R,T)[X\cap Y]=T[X\cap Y]$.
\end{description}
We say that \emph{$\mathbb K$ has a semijoin function} if at least one semijoin function on $\mathbb K$ exists.
\end{definition}

\begin{example} \label{exam:cons-wit}
We now give an example and a  non-example of a semijoin function.
\begin{enumerate}
\item 
 If $\mathbb B =(\{0,1\}, \vee, 0)$ is the Boolean monoid, then 
 the semijoin operation $\ltimes$ on standard relations is 
 an example of a semijoin function on $\mathbb B$. Indeed, property $\sf{(P1)}$ holds for $\ltimes$ because if two standard relations $R(X)$ and $T(Y)$ are consistent, then their join $R\Join T$ witnesses their consistency, hence $R\ltimes T = (R\Join T)[X]= R$. For properties  $\sf{(P2)}$,  $\sf{(P3)}$,  
 $\sf{(P4)}$, first note that the relation $\sqsubseteq$ on  $\mathbb B$ is just
  set-theoretic containment  $\subseteq$. Properties $\sf{(P2)}$,  $\sf{(P3)}$,  
 $\sf{(P4)}$ can now be proved easily using the definitions of the semijoin  $\ltimes$ and the join $\Join$ of two standard relations.  For instance, towards proving property $\sf{(P4)}$, assume that $T[X\cap Y]\subseteq R[X\cap Y]$. We have to show that 
 $R \ltimes S[X\cap Y] = T[X\cap Y]$. In view of property $\sf{(P3)}$, it suffices to show that $T[X\cap Y]\subseteq (R\ltimes T)[X\cap Y]$. Take a tuple
 $t \in T[X\cap Y]$. Then there is a $Y$-tuple $t_1\in T $ such that $t_1[X\cap Y]=t$. Since $T[X\cap Y]\subseteq R[X\cap Y]$, we have that $t\in R[X\cap Y]$, hence there is a $X$-tuple $t_2\in R$ such that $t_2[X\cap Y]=t$. Therefore, $t_2$ joins with $t_1$, hence $t_2\in R\ltimes T$ and so $t=t_2[X\cap Y]\in (R\ltimes T)[X\cap Y]$. 
 \item If ${\mathbb N}=(N,+,0)$ is the bag monoid, then the semijoin $R\ltimes_{\mathbb N} T$ of two bags $R(X)$ and $T(Y)$  is not a semijoin function on $\mathbb N$.
 Here, 
 $R\ltimes_{\mathbb N} T = (R\Join_{\mathbb N} T)[X]$, where $R\Join_{\mathbb N} T$ is the \emph{bag-join} of $R$ and $T$, i.e., for every $(X\cup Y)$-tuple $t$, we have that
 $(R\Join_{\mathbb N} T)(t) = R(t[X])\cdot T(t[Y])$.
 The reason that $\ltimes_{\mathbb N}$ is not a semijoin function on $\mathbb N$ is that $\ltimes_{\mathbb N}$ does not have  property $\sf{(P1)}$. Indeed, as pointed out in \cite{DBLP:conf/pods/AtseriasK21}, the bag-join of two consistent bags need not witness their consistency, hence property $\sf{(P1)}$ fails for such bags.  In particular, this is the case for the bags $R(A,B) = \{ (1,2):1, (2,2):1\}$ and $T(B,C) = \{(2,1):1, (2,2):1\}$. They are consistent but it is easy to check that $R\ltimes_{\mathbb N} T=\{(1,2):2, (2,2):2\}\not = R$. 
\end{enumerate}
\end{example}

As defined in \cite{DBLP:journals/sigmod/AtseriasK25},
a \emph{consistency witness function on $\mathbb K$} is a   function $W$ that takes as arguments two  $\mathbb K$-relation $R(X)$  and  $T(Y)$, and returns as value a $\mathbb K$-relation $W(R,T)$ over  $X\cup Y$ such that if $R$ and $T$ are consistent $\mathbb K$-relations, then $W(R,T)$ is a consistency witness for $R$ and $T$.  For example, the join $R\Join T$ of two standard relations $R(X)$ and $T(Y)$ is a consistency witness function on $\mathbb B$. 

\begin{definition} \label{defn:coherent}
Let $\mathbb K$ be a positive commutative monoid and let $W$ be a consistency witness function on $\mathbb K$. We say that $W$ is \emph{coherent} if the  function ${\mathcal S}_W$ 
is a semijoin function on $\mathbb K$, where ${\mathcal S}_W$ takes as input two $\mathbb K$-relations $R(X)$ and $T(Y)$, and returns as value the projection of $W(R,T)$ on $X$, that is, 
${\mathcal S}_W(R,T) =W(R,T)[X].$

We say that $\mathbb K$ \emph{has a coherent consistency witness function} if at least one coherent consistency witness function on $\mathbb K$ exists.
\end{definition}
Clearly, the join $\Join$ of two standard relations is a coherent consistency witness function on $\mathbb B$, 
since $\mathcal S_{\Join}$ is the semijoin function
$\ltimes$ on standard relations.
In contrast, let $W$ be the consistency witness function on standard relations such that
$W(R,T)= R\Join T$, if $R$, $T$ are consistent, and $W(R,T)=\emptyset$, if $R$, $T$ are not consistent. Then $W$ is not coherent because ${\mathcal S}_W$ does not have property ${\sf (P4)}$; observe, though,  that ${\mathcal S}_W$ has properties ${\sf (P1)}$, ${\sf (P2)}$, ${\sf (P3)}$.

The next result  gives a sufficient condition for the existence of an explicit coherent consistency witness function; hence, it also gives a sufficient condition for the existence of an explicit semijoin function. 

\begin{proposition} \label{prop:coherent}
Let ${\mathbb K}=(K,+,0)$ be a positive commutative monoid.
If $\mathbb K$ has an expansion to a bounded distributive lattice ${\mathbb K'}= (K,+,\times,0,1)$, then the
standard join $\Join_{{\mathbb K'},S}$ is a coherent consistency witness function on $\mathbb K$, where for $\mathbb K$-relations $R(X)$ and $T(Y)$ and for every $(X\cup Y)$-tuple $t$, we have that
$(R\Join_{{\mathbb K'},S} T)(t)=
R(t[X])\times T[t[Y]).$
\end{proposition}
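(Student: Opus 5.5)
The plan is to compute the marginals of the standard join in closed form and read off everything from that formula. Since $\mathbb K'$ is a bounded distributive lattice expanding $\mathbb K$, the operation $+$ is the lattice join $\vee$ with bottom element $0$, and $\times$ is the lattice meet $\wedge$ with top element $1$. The first observation is that the canonical pre-order $\sqsubseteq$ coincides with the lattice order $\leq$. Indeed, $a \sqsubseteq b$ means $a\vee c=b$ for some $c$, which gives $a\leq b$; conversely, if $a\leq b$ then $a\vee b=b$. Thus in properties (P2)--(P4) we may read $\sqsubseteq$ as $\leq$.

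Let $R(X)$ and $T(Y)$ be $\mathbb K$-relations, write $Z=X\cap Y$, and set $W=R\Join_{{\mathbb K'},S}T$. I first note that $W$ is finitely supported, because $\supp(W)$ is contained in the set of joins of tuples in $R'$ with tuples in $T'$. The key step is the identity
\[
W[X](x)=\bigvee_{t:\,t[X]=x} \bigl(R(x)\wedge T(t[Y])\bigr) = R(x)\wedge \bigvee_{y:\,y[Z]=x[Z]} T(y) = R(x)\wedge T[Z](x[Z]).
\]
It uses finite distributivity of $\wedge$ over $\vee$, together with the fact that zero terms and empty joins contribute the bottom element $0$. Symmetrically, $W[Y](y)=T(y)\wedge R[Z](y[Z])$. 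Projecting once more, using Proposition~\ref{lem:easyfacts1} and distributivity again, I get
\[
W[X][Z](z)=R[Z](z)\wedge T[Z](z).
\]

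From these formulas the four properties follow directly, with ${\mathcal S}_W(R,T)=W[X]$:
\begin{itemize}
\item (P2) holds because $R(x)\wedge T[Z](x[Z])\leq R(x)$.
\item (P3) holds because $R[Z]\wedge T[Z]\leq T[Z]$.
\item (P4) holds because, if $T[Z]\leq R[Z]$, then $R[Z]\wedge T[Z]=T[Z]$.
\end{itemize}
For the consistency witness property and (P1), suppose $R$ and $T$ are consistent. Then they are inner consistent, so $R[Z]=T[Z]$. For every $X$-tuple $x$ we have $R(x)\leq R[Z](x[Z])=T[Z](x[Z])$, since $R(x)$ is one of the joinands of that marginal. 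Hence $W[X](x)=R(x)$, and symmetrically $W[Y]=T$. So $W$ is a consistency witness for $R$ and $T$, and ${\mathcal S}_W(R,T)=R$.

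The only step requiring real care is the distributivity computation for the marginal. I must check that marginals, which are defined as sums over the support, agree with lattice joins over all matching tuples; this is immediate because $0$ is the neutral element of $\vee$. I must also use that all joins involved are finite, which follows from finite support. Everything else is routine.
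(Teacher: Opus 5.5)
Your proof is correct and follows essentially the same route as the paper. Both arguments use distributivity to compute ${\mathcal S}_W(R,T)(x)=R(x)\times T[X\cap Y](x[X\cap Y])$ and ${\mathcal S}_W(R,T)[X\cap Y]=R[X\cap Y]\times T[X\cap Y]$, and then read off (P2)--(P4) from the lattice laws. You do this by identifying $\sqsubseteq$ with the lattice order, while the paper lists idempotence, $u\times v\sqsubseteq u,v$, and monotonicity of $\times$.

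The differences are minor. You treat general $X,Y$ rather than $\{A,B\},\{B,C\}$. You also verify the consistency-witness property, and hence (P1), directly from the marginal formula and inner consistency, whereas the paper cites \cite{AK25} for it.
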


\begin{proof} Let $R(X)$ and $T(Y)$ be two $\mathbb K$-relations.
To keep the notation light,  assume that $X=\{A,B\}$ and $Y=\{B,C\}$;  other than a heavier  notation,  the general case is proved the same way.

Assume that the monoid ${\mathbb K}=(K,+,0)$ has an expansion to a bounded distributive lattice ${\mathbb K'}=(K,+,\times, 0,1)$. Let $W$ be the standard join $\Join_{{\mathbb K'}, S}$, where for $\mathbb K$-relations $R(X)$ and $T(Y)$ and for every $(X\cup Y)$-triple $(a,b,c)$, we have that
$$W(R,S)(a,b,c)= (R\Join_{{\mathbb K'}, S} T)(a,b,c) = R(a,b) \times T(b,c).$$
As shown in \cite{AK25},  $W$ is a consistency witness function on $\mathbb K$. To show that 
$W$ is coherent, we must show that the function $\sj_W$ with 
$\sj_{W}(R,T)=  W(R,T)[X]$ is a semijoin function on $\mathbb K$, i.e., we must show that $\sj_W$ has properties $\sf{(P1)}$, $\sf{(P2)}$, $\sf{(P3)}$, and $\sf{(P4)}$ in Definition \ref{defn:semijoin}. Note that,  since $\mathbb K'$ is  a bounded distributive lattice,  the following facts hold:
\begin{enumerate}
\item 
For all $u \in K$, we have that $u\times u = u$.
\item For all $u,v \in K$, we have that 
$u \times v \sqsubseteq u$ and
$u \times v \sqsubseteq v$.
\item For all $u,v, w \in K$, if $u\sqsubseteq v$, then $u\times w \sqsubseteq v\times w$.
\end{enumerate}

\noindent Property $\sf{(P1)}$: $\sj_W$ has property $\sf{(P1)}$ because $W$ is a consistency witness function.

\smallskip

\noindent Property $\sf{(P2)}$: We must show that $\sj_W(R,T) \sqsubseteq R$, i.e.,   $\sj_W(R,T)(a,b) \sqsubseteq R(a,b)$ holds, for every $X$-pair $(a,b)$.
For every such pair $(a,b)$, we have that
$$\sj_W(R,T)(a,b) = W(R,T)[X](a,b)= \sum_c W(R,T)(a,b,c) = \sum_c R(a,b)\times T(b,c)= $$
$$= R(a,b)\times \big (\sum_c T(b,c)\big )~\sqsubseteq~ R(a,b),$$
where the above equalities hold by the definitions of $W$ and $\sj_W$, the definition of the projection operation, and the distributivity of $\mathbb K'$, while the last inequality $\sqsubseteq$ holds by fact 2  about the bounded distributive lattice $\mathbb K'$.

\smallskip

\noindent Property $\sf{(P3)}$: We must show that $\sj_W(R,T) [B] \sqsubseteq T[B]$, i.e., $\sj_W(R,T) [B](b)  \sqsubseteq T[B](b)$ holds, for every $B$-element $b$. For every such element $b$, we have that
$$\sj_W(R,T)[B](b) = (W(R,T)[X][B])(b)= W(R,T)[B](b)= 
\sum_{a,c} W(R,T)(a,b,c) = $$
$$\sum_{a,c} R(a,b)\times T(b,c) = 
\sum_a R(a,b) \times \big (\sum_c T(b,c)\big ) =
\big (\sum_a R(a,b)\big ) \times T[B](b) = $$
$$R[B](b) \times T[B] (b)~  
\sqsubseteq~  T[B](b),$$
where the above equalities hold by the definitions of $W$ and $\sj_W$, the definition of the projection operation,   part 2 of Proposition \ref{lem:easyfacts1}, and the distributivity of $\mathbb K'$, while the inequality $\sqsubseteq$ holds by  fact 2 about  the bounded distributive lattice $\mathbb K'$.

\smallskip

\noindent Property $\sf{(P4)}$: Assume that
$T[B] \sqsubseteq R[B]$ holds. We must show that
$\sj_W(R,T)[B] = T[B]$ holds as well. In view of property  $\sf{(P3)}$, it suffices to show that $T[B] \sqsubseteq \sj_W(R,T)[B]$, i.e, it suffices to show that $T[B](b)  \sqsubseteq \sj_W(R,T)[B](b)$ holds, for every $B$-element $b$. As seen in the proof of proof of property $\sf{(P3)}$, for every $B$-element $b$, we have that
$$\sj_W(R,T)[B](b) = R[B](b) \times T[B](b).$$ 
Since we assumed that 
$T[B] \sqsubseteq R[B]$ holds, we have that $T[B](b) \sqsubseteq R[B](b)$. 
Hence, by using the facts 1 and 3 about the bounded distributive lattice $\mathbb K'$, we have that
$$T[B](b) = T[B](b) \times T[B](b)~\sqsubseteq~ R[B](b) \times T[B](b) = \sj_W[B](b).$$
This completes the proof of property $\sf{(P4)}$ and 
the proof of the propositions.
\end{proof}

 \begin{example} \label{exam:coherent}
We now give two  examples that illustrate the applicability of Proposition \ref{prop:coherent}. 
 \begin{enumerate}
 \item For every set $A$, the powerset monoid ${\mathbb P}(A)= (\mathcal{P}(A), \cup,\emptyset)$ expands to the bounded distributive lattice $(\mathcal{P}(A), \cup,\cap, \emptyset, A)$.  In this case, the standard join is the coherent consistency witness function $W$ such that 
 $W(R,S)(t) = R(t[X]) \cap S(t[Y]).$
 \item The fuzzy monoid ${\mathbb V}=([0,1], \max, 0)$ expands to the bounded distributive lattice\\ $([0,1], \max, \min, 0, 1))$. In this case, the standard join is the coherent consistency witness function $W$ such that 
 $W(R,S)(t) = \min\{R(t[X]),S(t[Y])\}.$
 \end{enumerate}
 \commentout{
 \item The monoid  ${\mathbb R}^{\geq 0}=([0,\infty), +, 0)$ expands to the semifield $([0,\infty), +, \times, /,0, 1)$.
 In this case, the Vorob'ev join is the coherent consistency witness function $W$ such that 
$$W(R,S)(t) =
\begin{cases}
(R(t[X])\times S(t[Y]))/R(t[X\cap Y])  & \text{if $R(t[X\cap Y]) = T[t(X\cap Y)] \not = 0$}\\
			0 & \text{otherwise}.\\
		 \end{cases}$$
\item   The monoid ${\mathbb T}= ((-\infty,\infty], \min, \infty)$ expands to the semifield $((-\infty,\infty], \min, +, -,\infty,0)$. 
In this case, the Vorob'ev join is the coherent consistency witness function $W$ such that 
$$W(R,S)(t) =\begin{cases}
R(t[X])+ S(t[Y]) -R(t[X\cap Y]) &
\text{if $R(t[X\cap Y]) = T[t(X\cap Y)] \not = \infty$}
\\
			\infty & \text{otherwise}.\\
		 \end{cases}$$
 (in the previous two expressions, $\times$, $/$, $+$, and $-$ are the standard multiplication, division, addition, and subtraction operations on the real numbers).
}
 \end{example}
 

So far, we have not exhibited a semijoin function for  the bag monoid ${\mathbb N}=(N,+,0)$.  The existence of such a function on $\mathbb N$ will follow from the  results in this section. For now, we point out that there are positive commutative monoids for which no semijoin function exists.

\begin{proposition} \label{prop:<3,5>}

The numerical semigroup ${\mathbb K}_{3,5}=(\langle 3,5\rangle, +,0)$ has no semijoin function, where $\langle 3,5\rangle=\{3m+5n: m, n \geq 0\}$,   
\end{proposition}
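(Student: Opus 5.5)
The plan is to exhibit a single pair of ${\mathbb K}_{3,5}$-relations $R(X)$ and $T(Y)$ on which properties $\sf{(P2)}$ and $\sf{(P4)}$ of Definition \ref{defn:semijoin} cannot hold together. A semijoin function may be chosen independently for each input pair, so any obstruction has to live inside one pair. Property $\sf{(P1)}$ places no constraint on inconsistent pairs. Property $\sf{(P3)}$ can always be met, for instance by the empty relation. The real tension is therefore between $\sf{(P4)}$, which forces a prescribed marginal on $X\cap Y$, and $\sf{(P2)}$, which bounds each value of $\sj(R,T)$ pointwise by the canonical preorder $\sqsubseteq$.

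First I would record what $\sqsubseteq$ looks like in ${\mathbb K}_{3,5}$. For $a,b\in\langle 3,5\rangle$ we have $a\sqsubseteq b$ iff $b-a\in\langle 3,5\rangle$. In particular, the elements $y\sqsubseteq 5$ are only $0$ and $5$, because $5-3=2\notin\langle 3,5\rangle$.

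Next comes the construction. Take $X=\{A,B\}$ and $Y=\{B\}$. Let $R$ assign value $5$ to each of the three tuples $(a_1,b),(a_2,b),(a_3,b)$ and $0$ elsewhere, so $R[B](b)=15$. Let $T(b)=3$ and $T$ be $0$ elsewhere. Since $15=3+12$ and $12\in\langle 3,5\rangle$, we have $T[B]\sqsubseteq R[B]$; at all other $B$-values both sides are $0$.

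Now suppose $\sj$ is a semijoin function on ${\mathbb K}_{3,5}$. By $\sf{(P4)}$, $\sj(R,T)[B](b)=3$. By $\sf{(P2)}$, every value $\sj(R,T)(a,b)$ satisfies $\sj(R,T)(a,b)\sqsubseteq R(a,b)$. Hence it is $0$ off the three tuples, where $R$ is $0$ and positivity applies, and it lies in $\{0,5\}$ on them. The marginal at $b$ is then a sum of three elements of $\{0,5\}$, which lies in $\{0,5,10,15\}$ and so cannot equal $3$. This contradiction proves the proposition.

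The main obstacle is finding the right numbers. Smaller attempts fail: with two tuples of value $5$, the only admissible targets below $10$ are $0,5,10$, and all of these are reachable. With values such as $(6,5)$ or $(8,8)$, every admissible target is again decomposable. Three copies of $5$ is what makes the non-decomposable target $3$ become admissible.
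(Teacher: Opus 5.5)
Your proof is correct and uses the same strategy as the paper. You take a single pair, $R$ over $\{A,B\}$ and $T$ over $\{B\}$, on which (P4) forces a prescribed marginal that no relation pointwise $\sqsubseteq$-below $R$ can realize, so (P2) must fail.

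The difference is in the numbers, and yours are the ones that actually work. The paper takes $R$-values $3,3$ and $T(v)=5$, and asserts $T[V]\sqsubseteq R[V]$. But $5\not\sqsubseteq 6$ in ${\mathbb K}_{3,5}$, because $6-5=1\notin\langle 3,5\rangle$. So the hypothesis of (P4) fails on the paper's pair. In fact the empty relation satisfies (P1)--(P4) there, so no contradiction arises. Your explicit description of $\sqsubseteq$ as ``$b-a\in\langle 3,5\rangle$'' is exactly what guards against this. In your instance, $15-3=12\in\langle 3,5\rangle$, so (P4) is genuinely triggered.

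One small correction to your closing heuristic: three tuples are not necessary. $R$-values $5$ and $10$ with $T(b)=3$ also work:
\begin{itemize}
\item the elements $\sqsubseteq 5$ are $0,5$;
\item the elements $\sqsubseteq 10$ are $0,5,10$;
\item $15-3=12\in\langle 3,5\rangle$, so (P4) applies, yet every admissible sum is a multiple of $5$.
\end{itemize}
This two-tuple version gives a correct witness $c_1=5$, $c_2=10$, $b=3$ for the failure of the production property for $n=2$. It can replace the paper's witness $c_1=c_2=3$, $b=5$, which has the same defect.
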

\begin{proof} Towards a contradiction, assume that $\sj$ is a semijoin function on  ${\mathbb K}_{3,5}$.  Let $U$, $V$ be two attributes.
Let $R(U,V)$ be the ${\mathbb K}_{3,5}$-relation with $R(u_1,v)=3$, $R(u_2,v)=3$, and $R(z,w)=0$, for all other pairs. Let $T(V)$
be the ${\mathbb K}_{3,5}$-relation with $T(v)=5$ and $T(z)=0$, if $z\not = v$. Then $T[V] \sqsubseteq R[V]$, since $T[V]=T$ and $R[V]$ is the ${\mathbb K}_{3,5}$-relation with
$R[V](v)=6$ and $R[V](z)=0$, if $z\not = v$. Consider the value $\sj(R,T)$ of the semijoin function $\sj$ on the pair $(R,T)$. By property $\sf{(P4)}$ and since  $T[V] \sqsubseteq R[V]$, we must have that
$\sj(R,T)[V]=T[V]=T$. By property $\sf{(P2)}$, we must have that $\sj(R,T)\sqsubseteq R$, which is impossible because there do not exist two elements in the set $\langle 3,5\rangle$ such that each is at most $3$ and their sum is $5$.
\end{proof}

Our first goal is to characterize the positive commutative monoids for which a semijoin function exists. Towards this goal, we introduce a combinatorial property of monoids.

\begin{definition}\label{dfn:frugal}
Let ${\mathbb K}=(K,+,0)$ be a positive commutative monoid. 
\begin{itemize}
\item We say that $\mathbb K$ has the \emph{\fpp~for a natural number $n\geq 1$}
if for  all elements  $b, c_1, c_2, \ldots, c_n$  in $K$ such that $b\sqsubseteq c_1+c_2+\cdots +c_n$,  there are elements $d_1, d_2, \ldots, d_n$ in $K$ such that $d_1\sqsubseteq c_1, d_2 \sqsubseteq c_2, \ldots, d_n\sqsubseteq c_n$, and $d_1+d_2+\cdots +c_n = b$.
\item We say that $\mathbb K$ has the \emph{\fpp}~if it has the \fpp~for every natural number $n\geq 1$.
\end{itemize}
\end{definition}

Note that the \fpp~for $n=1$ is trivial.
The term ``\fpp'' was chosen because this property has a motivation from operations research. Consider a company that produces a  product at $n$ different locations and assume  that the maximum production capacity at location $i$ is $c_i$, $1\leq i\leq n$.  Suppose that the total demand for the product is $b$, where $b\sqsubseteq c_1+\cdots+c_n$.  The \fpp~for $n$ asserts that there is a production policy for the company so that the total production at the $n$ locations meets precisely the demand with no waste whatsoever (i.e., there is no excess production).

Observe that the proof of Proposition \ref{prop:<3,5>} actually shows that the numerical semigroup ${\mathbb K}_{3,5}$ does not have the \fpp~for $n=2$; this is witnessed by $c_1=3$, $c_2=3$, and $b=5$. As we shall see next, it is no accident that  ${\mathbb K}_{3,5}$ has no semijoin function and at the same time lacks the \fpp~for $n=2$.

\begin{theorem} \label{thm:fpp}
Let $\mathbb K$ be a positive commutative monoid. Then the following statements are equivalent:
\begin{enumerate}
\item $\mathbb K$ has a coherent consistency witness function.
\item $\mathbb K$ has a semijoin function.
\item $\mathbb K$ has the \fpp~for $n=2$
\item $\mathbb K$ has the \fpp.
\end{enumerate}
\end{theorem}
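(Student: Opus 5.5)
I will prove the cycle of implications $1\Rightarrow 2\Rightarrow 3\Rightarrow 4\Rightarrow 1$.

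\emph{Step $1\Rightarrow 2$.} This is immediate from Definition \ref{defn:coherent}: if $W$ is coherent, then ${\mathcal S}_W$ is a semijoin function.

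\emph{Step $2\Rightarrow 3$.} I will generalize the proof of Proposition \ref{prop:<3,5>}. Let $b\sqsubseteq c_1+c_2$. Take $R(U,V)$ with $R(u_1,v)=c_1$, $R(u_2,v)=c_2$ and $R$ zero elsewhere, and take $T(V)$ with $T(v)=b$ and $T$ zero elsewhere. Then $T[V]\sqsubseteq R[V]$. By ${\sf (P4)}$, $\sj(R,T)[V]=T$, and by ${\sf (P2)}$, $\sj(R,T)\sqsubseteq R$. So $d_i:=\sj(R,T)(u_i,v)$ satisfy $d_i\sqsubseteq c_i$ and $d_1+d_2=b$.

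\emph{Step $3\Rightarrow 4$.} I will argue by induction on $n$. Suppose $b\sqsubseteq c_1+\cdots+c_{n+1}=(c_1+\cdots+c_n)+c_{n+1}$. The case $n=2$ gives $e\sqsubseteq c_1+\cdots+c_n$ and $d_{n+1}\sqsubseteq c_{n+1}$ with $e+d_{n+1}=b$. The induction hypothesis applied to $e$ then splits $e$ as $d_1+\cdots+d_n$ with $d_i\sqsubseteq c_i$.

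\emph{Step $4\Rightarrow 1$.} This is the main construction. Given $R(X)$ and $T(Y)$, let $Z=X\cap Y$.
\begin{itemize}
\item If $R$ and $T$ are consistent, I set $W(R,T)$ to be some (fixed, chosen) consistency witness. Then ${\sf (P1)}$ holds, and ${\sf (P2)}$--${\sf (P4)}$ follow because both marginals are exact: $W[X]=R$ and $W[Z]=R[Z]=T[Z]$.
\item Otherwise, I first build an $X$-relation $R^*\sqsubseteq R$ with $R^*[Z]\sqsubseteq T[Z]$, and with $R^*[Z]=T[Z]$ whenever $T[Z]\sqsubseteq R[Z]$. Then I set $W(R,T)(t)=R^*(t[X])$ if $t[Y]=s_0(t[Z])$ and $0$ otherwise. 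Here $s_0(z)$ is a fixed $Y$-tuple extending $z$; for $z$ in the support of $R^*[Z]$ I take $s_0(z)$ to be a tuple in the support of $T$, which exists by positivity since $T[Z](z)\neq 0$ there. Then $W(R,T)[X]=R^*$, since each $X$-tuple has exactly one extension counted.
\end{itemize}

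\emph{Constructing $R^*$.} I do this tuple by tuple on $z\in\tuples(Z)$. Write $r_1,\ldots,r_n$ for the finitely many tuples of $R'$ projecting to $z$.
\begin{itemize}
\item If $T[Z](z)\sqsubseteq R[Z](z)=\sum_i R(r_i)$, the production property yields $d_i\sqsubseteq R(r_i)$ with $\sum_i d_i=T[Z](z)$. I set $R^*(r_i)=d_i$.
\item Otherwise I set $R^*(r_i)=0$ for all $i$. This gives ${\sf (P3)}$, since $0\sqsubseteq$ anything, and ${\sf (P4)}$ is vacuous at $z$. But ${\sf (P4)}$ requires equality only under the global hypothesis $T[Z]\sqsubseteq R[Z]$, which holds pointwise, so the first case covers every $z$ in that situation.
\end{itemize}
${\sf (P2)}$ holds pointwise, and ${\sf (P3)}$ holds because in the first case $R^*[Z](z)=T[Z](z)$. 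Finite support is preserved.

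\emph{Expected obstacle.} The delicate point is making this definition a well-defined function; it uses choice of the $d_i$ and of witnesses. I also need to make sure the consistent case is handled separately so that ${\sf (P1)}$ holds exactly. The per-$z$ production step is where the production property is essential.
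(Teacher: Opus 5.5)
Your proposal is correct and follows the paper's proof essentially step for step: the same cycle $1\Rightarrow 2\Rightarrow 3\Rightarrow 4\Rightarrow 1$, the same two-tuple construction (adapted from the ${\mathbb K}_{3,5}$ counterexample) for $2\Rightarrow 3$, the same induction for $3\Rightarrow 4$, and the same per-tuple use of the production property for $4\Rightarrow 1$. The only difference is cosmetic. The paper splits globally on whether $T[X\cap Y]\sqsubseteq R[X\cap Y]$ and returns the empty relation when this fails. You instead split pointwise on each $z$, first building $R^*$ over $X$ and then lifting it along a chosen extension $s_0$, which works equally well.
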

\begin{proof}
We will prove this theorem in a round-robin style.

\noindent  $(1) \Longrightarrow (2)$
Follows immediately from  Definition \ref{defn:coherent} of a coherent consistency witness function. 

\smallskip

\noindent  $(2) \Longrightarrow (3)$ Assume that  $\sj$ is a semijoin function on $\mathbb K$. Given elements $b, c_1, c_2$  in $K$ such that $b\sqsubseteq c_1+c_2$, consider the $\mathbb K$-relations $R(U,V)$
and $T(V)$ such that
\begin{itemize}
    \item $R(u_1,v)=c_1$, $R(u_2,v) =c_2$, and $R(z,w)=0$, for all other pairs.
    \item $T(v)=b$ and $T(z)=0$, for all $z\not = v$.
\end{itemize}
Since $b\sqsubseteq c_1 +c_2$, we have that
$T[V] \sqsubseteq R[V]$. Hence, by property $\sf{(P4)}$ of the semijoin function  $\sj$, we must have that
$\sj(R,T)[V]=T[V]=T$. By property $\sf{(P2)}$ of the semijoin function $\sj$, we must have that $\sj(R,T)\sqsubseteq R$. Hence, $\sj(R,T)$ must be a $\mathbb K$-relation 
of the form  $\sj(R,T)(u_1,v)=d_1$,  $\sj(R,T)(u_2,v)=d_2$, and $\sj(R,T)(z,w) = 0$, for all other pairs, where $d_1$, $d_2$ are elements of $K$ such that $d_1\sqsubseteq c_1$, $d_2\sqsubseteq c_2$. Furthermore, since $\sj(R,T)[V]=T[V]=T$, we must have that $d_1+d_2 = b$. Thus, $\mathbb K$ has the \fpp~for $n=2$.

\smallskip

\noindent $(3) \Longrightarrow (4)$ 
Assume that $\mathbb K$ has the \fpp~for $n=2$. We have to show that $\mathbb K$ has the \fpp~for every $k\geq 1$.  As mentioned earlier, the \fpp~for $k=1$ holds trivially.  By induction on $k$, assume that  $\mathbb K$ has the \fpp~ for $k$. We will show that $\mathbb K$ has the \fpp~for $k+1$. Towards this, assume that   $b, c_1, c_2, \ldots, c_k,c_{k+1}$ are elements   in $K$ such that $b\sqsubseteq c_1+c_2+\cdots +c_k+c_{k+1}$. We must show that  there are elements $d_1, d_2, \ldots, d_k,d_{k+1}$ in $K$ such that $d_1\sqsubseteq c_1, d_2 \sqsubseteq c_2, \ldots, d_k\sqsubseteq c_k, d_{k+1}\sqsubseteq c_{k+1}$, and $d_1+d_2+\cdots +d_k+d_{k+1} = b$.
Let $c=c_1+c_2\cdots+c_k$. Since $b\sqsubseteq c_1+c_2+\cdots +c_k+c_{k+1}$, we have that $b\sqsubseteq c+c_{k+1}$. Since $\mathbb K$ has the \fpp~for $n=2$, it follows that there are elements $d$ and $d_{k+1}$ in $K$ such that
$d\sqsubseteq c$, $d_{k+1}\sqsubseteq c_{k+1}$, and $b = d +d_{k+1}$. Since $c=c_1+c_2\cdots+c_k$, we have that $d\sqsubseteq c_1+c_2\cdots+c_k$. By induction hypothesis, $\mathbb K$ has the
\fpp~for $k$. Hence, there are elements
$d_1,d_2,\ldots,d_k$ in $K$ such that 
$d_1\sqsubseteq c_1, d_2 \sqsubseteq c_2, \ldots, d_k\sqsubseteq c_k$, and $d_1+d_2+\cdots +d_k = d$; consequently, 
$d_1+d_2+\cdots +d_k+d_{k+1} = d+d_{k+1} = b$. Thus, $\mathbb K$ has the \fpp~for $k+1$.

\smallskip

\noindent $(4) \Longrightarrow (1)$ Assume that
$\mathbb K$ has the \fpp. We will construct a coherent consistency witness function
$W$ on   $\mathbb K$ by considering cases for arbitrary input $\mathbb K$-relations $R(X)$ and $T(Y)$ to $W$. 
\smallskip

   \noindent{\emph{Case 1.}} The $\mathbb K$-relations $R(X)$ and $T(Y)$ are consistent. We pick a $(X\cup Y)$-relation $U$ witnessing the consistency of $R(X)$ and $T(X)$, and we set $W(R,T)=U$.
At this point, we already have that $W$ is a consistency witness on $\mathbb K$.  Thus, in the remainder of the proof, we will focus on showing in each case that the function $\sj_W$ is a semijoin function on $\mathbb K$, where $\sj_W(R,T)=W(R,T)[X]$.
In the present case, we have that
$\sj_W(R,T)=W(R,T)[X]=U[X]=R$, since $U$ witnesses the consistency of $R$ and $T$. Therefore, properties $\sf{(P1)}$ and
$\sf{(P2)}$ in Definition \ref{defn:semijoin} hold. Properties $\sf{(P3)}$ and $\sf{(P4)}$ in Definition \ref{defn:semijoin}  also hold because, since $R(X)$ and $T(Y)$ are consistent, they are also inner consistent. This means that $R[X\cap Y]=T[X\cap Y]$, hence $\sj_W(R,T)[X\cap Y]=W(R,T)[X][X\cap Y]=R[X\cap Y]=T[X\cap Y].$

\smallskip

   \noindent{\emph{Case 2.}} The $\mathbb K$-relations $R(X)$ and $T(Y)$ are not  consistent. We need to define the value of $W(R,T)$ in such a way that $\sj_W(R,T)=W(R,T)[X]$ satisfies properties $\sf(P2)$, $\sf(P3)$, $\sf(P4)$. We distinguish two subcases.

\smallskip

   \emph{Subcase 2.1.} Assume that $T[X\cap Y]\not \sqsubseteq R[X\cap Y]$. In this case, we set $W(R,T)=\emptyset$, i.e., $W(R,T)(w)=0$, for every
   $(X\cup Y)$-tuple $w$. It follows that 
   $\sj_W(R,T)=\emptyset$, i.e., $\sj_W(R,T)(t) = 0$, for every $X$-tuple $t$. Clearly, properties $\sf(P2)$, $\sf(P3)$, $\sf(P4)$ are satisfied.

\smallskip

 \emph{Subcase 2.2.} Assume that $T[X\cap Y] \sqsubseteq R[X\cap Y]$. 
We will show that there is a $\mathbb K$-relation $V$
over $X\cup Y$ such that $V[X]\subseteq R$
 and $V[X\cap Y]=  T[X\cap Y]$.  This $\mathbb K$-relation $V$ will be the desired value
$W(R,T)$, as then $V[X]$ will satisfy properties 
$\sf{(P2)}$, $\sf{(P3)}$, and $\sf{(P4)}$.
 Let $t$ be a $(X\cap Y)$-tuple in the support of $T[X\cap Y]$ and let
$T[X\cap Y](t)=b\not = 0$. Let $t_1,t_2, \ldots,t_n$ be the $X$-tuples in the support of $R$ such that $t_1[X]=t,t_2[X]=t, \ldots,t_n[X]=t$. Let $c_1=R(t_1),c_2=R(t_2),\ldots, c_n=R(t_n)$. Since
$T[X\cap Y] \sqsubseteq R[X\cap Y]$, we have that $b\leq c_1 + c_2 +  \cdots +c_n$.  Since $\mathbb K$ has the \fpp, there are elements $d_1,d_2,\ldots,d_n$ in $K$ such that $d_1\sqsubseteq c_1, d_2\sqsubseteq c_2,\ldots, d_n\sqsubseteq c_n$ and $d_1+d_2+\cdots + c_n=b$.  Let $v$ be a tuple such that  $tv$ is in the support of $T$, where $tv$ is the tuple obtained by concatenating $t$ and $v$.
We now start building the $\mathbb K$-relation $V$ on $X\cup  Y$ by
putting $P(t_1v)=d_1,P(t_2v)=d_2, \ldots, P(t_nv)=d_n$, where $t_iv$ is the tuple obtained by concatenating
$t_i$ and $v$, $1\leq i\leq n$. We continue this process by going over each of the remaining tuples in the support of $T[X\cap Y]$, applying the transportation property in each case, and progressively building the relation $V$. We also put $V(w)=0$, for every $(X\cup Y)$-tuple $w$ such that
$w[X\cap Y]$ is not in the support of $T[X\cap Y]$.  By construction, the $\mathbb K$-relation $V$ obtained this way is such that $V[X]\sqsubseteq R$ (because of the inequalities of the form $d_{1}\sqsubseteq c_j$, for $1\leq j\leq n$) and $V[X\cap Y]= T[X\cap Y]$ (because of the equalities of the form $d_{1}+ d_{2}+\cdots+d_{n}=b$). Finally, we set
$W(R,T)=V$.
\end{proof}

Theorem \ref{thm:fpp} provides a useful tool for determining the existence of semijoin functions.  As an illustration, 
for every $k\geq 3$, let
${\mathcal P}_k$ be the set consisting of the empty set $\emptyset$, the $k$-element set $\{1,\ldots,k\}$, and all $(k-1)$-element subsets of $\{1,\ldots,k\}$. The \emph{truncated powerset} ${\mathbb P}_k$ is the structure ${\mathbb P}_k= ({\mathcal P}_k, \cup, \emptyset)$. For example,
$\mathbb{P}_3 = (\{\emptyset,\{1,2\},\{1,3\},\{2,3\},\{1,2,3\}\},\cup,\emptyset)$.
Clearly, each truncated powerset ${\mathbb P}_k$ is a positive commutative monoid. We claim that no truncated powerset ${\mathbb P}_k$ has a semijoin function.
To see this, consider the sets $B=\{1,3,\ldots,k-1\}$, $C_1=\{1,3,\ldots,k\}$, $C_2=\{2,3,\ldots,k\}$. Clearly, $B\sqsubseteq C_1 \cup C_2$, but there are no sets $D_1$ and $D_2$ in ${\mathcal P}_k$ such that
$D_1\sqsubseteq C_1$, $D_2\sqsubseteq C_2$, and $ D_1\cup D_2= B$. Thus, the truncated powerset  ${\mathbb P}_k$ is a monoid that does not have a semijoin function.

A broader consequence of Theorem \ref{thm:fpp} is the decidability of the existence of a semijoin function for finite positive commutative monoids.

\begin{corollary} \label{cor:decide}
The following problem is decidable in cubic time: given a finite positive commutative monoid $\mathbb K$, does $\mathbb K$ have a semijoin function?
In fact, this problem is decidable in \emph{LOGSPACE}.
\end{corollary}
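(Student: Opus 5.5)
By Theorem \ref{thm:fpp}, a finite positive commutative monoid $\mathbb K$ has a semijoin function if and only if it has the \fpp~for $n=2$. The decision procedure is therefore to check the production property for $n=2$ directly. We assume $\mathbb K$ is given by its Cayley table of $+$ over $K$, with $|K|=k$, and we locate $0$ by scanning for the neutral element.

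First, compute the canonical preorder $\sqsubseteq$ as a $k\times k$ Boolean matrix. We have $a\sqsubseteq b$ iff some entry in row $a$ of the table equals $b$, so a single pass over the table fills the matrix in $O(k^2)$ time.

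Next, verify the condition: for all $b,c_1,c_2$ with $b\sqsubseteq c_1+c_2$, there exist $d_1\sqsubseteq c_1$ and $d_2\sqsubseteq c_2$ with $d_1+d_2=b$. Checking this naively over all $(b,c_1,c_2,d_1,d_2)$ costs $O(k^5)$, so the counting has to be reorganized to get cubic time.
\begin{itemize}
\item For each pair $(c_1,c_2)$, compute the set $P(c_1,c_2)=\{d_1+d_2 : d_1\sqsubseteq c_1,\ d_2\sqsubseteq c_2\}$. Then check that $P(c_1,c_2)$ contains the down-set $\{b : b\sqsubseteq c_1+c_2\}$. (The reverse inclusion holds automatically: if $d_i+e_i=c_i$, then $d_1+d_2\sqsubseteq c_1+c_2$.)
\item Compute these sets incrementally. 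Since $\sqsubseteq$ is transitive, $\{d : d\sqsubseteq c\}=\{c' : c'+e=c \text{ for some } e\}$, so the down-set of $c_1$ is just a row scan of the preorder matrix.
\item Key observation: $P(c_1,c_2)$ equals the down-set of $c_1$ shifted by $c_2$, closed downward. More precisely, $x\in P(c_1,c_2)$ iff $x=d_1+d_2$ with $d_2\sqsubseteq c_2$. Hence $P(c_1,c_2)=\bigcup_{d_2\sqsubseteq c_2}(\mathrm{Down}(c_1)+d_2)$.
\end{itemize}
Evaluating this union directly is again $O(k^4)$ overall, which is the main obstacle for the cubic bound. One way around it is to note that $P(c_1,c_2)=\{d_1+d_2\}$ is the image of $\mathrm{Down}(c_1)\times \mathrm{Down}(c_2)$ under $+$. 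For a fixed $c_1$, this can be computed by a graph search on the structure over $K$ with edges $x\to x+e$. If the cubic bound still does not come out cleanly, we settle for polynomial time (quartic) and then strengthen the argument separately.

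For the LOGSPACE claim, it suffices that every quantifier ranges over $K$. The statement
\[
\forall b,c_1,c_2\ \bigl[(\exists e\ b+e=c_1+c_2)\rightarrow \exists d_1,d_2,e_1,e_2\ (d_1+e_1=c_1\wedge d_2+e_2=c_2\wedge d_1+d_2=b)\bigr]
\]
is a first-order sentence over the Cayley table. A first-order sentence over the input structure can be evaluated with a constant number of $O(\log k)$-bit counters, each doing table lookups. This places the problem in LOGSPACE, and hence in polynomial time.

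The step I expect to be delicate is not correctness but the precise cubic accounting for the verification step. Correctness is immediate from Theorem \ref{thm:fpp}, items 2 and 3.
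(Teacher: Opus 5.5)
Your reduction (items 2 and 3 of Theorem~\ref{thm:fpp}) and your LOGSPACE argument follow the paper's route. The \fpp~for $n=2$ is expressed by a fixed first-order sentence over the Cayley table, so nested loops with constantly many $O(\log |K|)$-bit counters evaluate it; your justification here is more explicit than the paper's ``clearly''.

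The gap is the cubic-time clause, which is part of the statement. You stop at an $O(|K|^4)$ procedure and promise to ``strengthen the argument separately''. LOGSPACE does not fill this in, since it only gives polynomial time of unspecified degree. The two ideas you float toward $O(|K|^3)$ also fail as stated:
\begin{itemize}
\item Describing $P(c_1,c_2)$ as the shift of $\mathrm{Down}(c_1)$ by $c_2$, ``closed downward'', gives exactly $\mathrm{Down}(c_1+c_2)$. That would make every monoid satisfy the \fpp~for $n=2$, which is false: in ${\mathbb K}_{3,5}$, $3\sqsubseteq 5+10$ because $3+12=15$, yet $\mathrm{Down}(5)+\mathrm{Down}(10)=\{0,5,10,15\}$. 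Only your ``more precisely'' formula $P(c_1,c_2)=\bigcup_{d_2\sqsubseteq c_2}(\mathrm{Down}(c_1)+d_2)$ is right.
\item A graph search along edges $x\to x+e$ computes a reachability closure, whereas $P(c_1,c_2)$ is a one-step image. From $0\in\mathrm{Down}(c_1)$ with unrestricted $e$, the search reaches all of $K$. Even restricting $e$ to $\mathrm{Down}(c_2)$ overshoots, since repeated steps add several elements of $\mathrm{Down}(c_2)$.
\end{itemize}

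The paper gets the cubic bound in one line, by examining the $|K|^3$ triples $(b,c_1,c_2)$. Read as $O(|K|^3)$, this tacitly treats the inner test (whether some $d_1\sqsubseteq c_1$ and $d_2\sqsubseteq c_2$ satisfy $d_1+d_2=b$) as constant time. As your own accounting correctly notes, that is not automatic.

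The simplest way to close your gap is to measure time, as usual, in the length of the input. A finite monoid is presented by its Cayley table, which has $N=|K|^2$ entries. Your algorithm runs in time $O(|K|^4)=O(N^2)$, comfortably within cubic time:
\begin{itemize}
\item compute $\sqsubseteq$ in $O(|K|^2)$;
\item for each pair $(c_1,c_2)$, build $P(c_1,c_2)$ as a bit vector in $O(|K|^2)$;
\item compare it with $\mathrm{Down}(c_1+c_2)$ in $O(|K|)$.
\end{itemize}
Even the naive $O(|K|^5)$ search is $O(N^{5/2})$. With that remark, your proposal proves the corollary as stated. If you instead insist on $O(|K|^3)$ in the number of elements, you would need a genuinely new idea that makes the per-triple test $O(1)$ on average. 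The paper's sketch does not supply one either.
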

\begin{proof}
By Theorem \ref{thm:fpp}, this problem is equivalent to checking that $\mathbb K$ has the \fpp~for $n=2$; this can be checked in  time 
$O(|K|^3)$ by examining all triples with elements in  $K$ one at a time. Clearly, this algorithm can be implemented in logarithmic space. 
\end{proof}

Our next result asserts that  the inner consistency property implies the existence of a semijoin function.

\begin{theorem} \label{thm:semijoin-exist}
    Let $\mathbb K$ be a positive commutative monoid.
If $\mathbb K$ has the inner consistency property, then $\mathbb K$ has a  semijoin function.
\end{theorem}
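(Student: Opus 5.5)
By Theorem \ref{thm:fpp}, it suffices to show that a positive commutative monoid $\mathbb K$ with the inner consistency property has the \fpp~for $n=2$. The plan is to encode a given instance $b \sqsubseteq c_1+c_2$ as a pair of small $\mathbb K$-relations that are inner consistent. The inner consistency property then makes them consistent, and a consistency witness for them supplies the required elements $d_1,d_2$.

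Concretely, fix $b,c_1,c_2\in K$ with $b\sqsubseteq c_1+c_2$, and choose $e\in K$ with $b+e=c_1+c_2$. Take two attributes $U$ and $V$ with at least two values each, and define:
\begin{itemize}
\item $R(U)$ by $R(u_1)=c_1$, $R(u_2)=c_2$, and $0$ elsewhere;
\item $T(V)$ by $T(v_1)=b$, $T(v_2)=e$, and $0$ elsewhere.
\end{itemize}
Here $X\cap Y=\emptyset$. The marginal of each relation on the empty set of attributes is its total sum, and both totals equal $c_1+c_2$. So $R[\emptyset]=T[\emptyset]$, and $R,T$ are inner consistent. By the inner consistency property they are consistent, so there is a $\mathbb K$-relation $W(U,V)$ with $W[U]=R$ and $W[V]=T$.

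Set $d_i=W(u_i,v_1)$ for $i=1,2$. By positivity and Proposition \ref{lem:easyfacts1}(1), the support of $W$ lies inside $\{u_1,u_2\}\times\{v_1,v_2\}$, since the supports of the marginals are the projections of the support of $W$. Then $W[V](v_1)=b$ gives $d_1+d_2=b$. Also $W[U](u_i)=c_i$ gives $c_i=d_i+W(u_i,v_2)$, so $d_i\sqsubseteq c_i$. This establishes the \fpp~for $n=2$, and Theorem \ref{thm:fpp} then yields a semijoin function.

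The argument uses the empty intersection, which the definition of inner consistency allows. The only step needing care is the support argument restricting $W$ to four tuples, and positivity handles it. If one preferred a nonempty intersection, a common attribute with a single value $w$ would give the same argument verbatim.
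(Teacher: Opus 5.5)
Your proof is correct, but it reaches the key step by a different route from the paper. The paper never applies the inner consistency property directly. It imports Theorem~3 of \cite{AK25} (inner consistency is equivalent to the transportation property). It then applies transportation to the vectors $(b,b')$ and $(c_1,\ldots,c_n)$, obtaining a $2\times n$ matrix whose first row witnesses the production property for every $n$ at once. Only then does it invoke Theorem~\ref{thm:fpp}.

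You instead apply the definition of inner consistency to two relations over disjoint attributes. Their only common marginal is the total sum, so any consistency witness $W(U,V)$ is itself a $2\times 2$ transportation matrix. Reading off its first column gives the production property for $n=2$, and the chain $(3)\Rightarrow(4)\Rightarrow(1)\Rightarrow(2)$ of Theorem~\ref{thm:fpp} does the rest. In effect, you re-derive, for the one instance you need, the implication from inner consistency to transportation that the paper cites from \cite{AK25}.

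What each route buys:
\begin{itemize}
\item Your argument is self-contained within the paper. It uses only the definition of the inner consistency property, positivity (for the support bound via Proposition~\ref{lem:easyfacts1}), and Theorem~\ref{thm:fpp}.
\item The paper's argument is shorter on the page and gives the production property for all $n$ directly, without the induction in Theorem~\ref{thm:fpp}. It does, however, depend on the external equivalence.
\end{itemize}

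Your remark that a shared single-valued attribute would work just as well is a sensible safeguard. It is not strictly needed, since the paper's definitions allow $\mathbb K$-relations over the empty set of attributes.
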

\begin{proof}
     As shown in  \cite{AK25}, the inner consistency property is  equivalent to the \ftp,  defined as follows. We say that a positive commutative monoid
${\mathbb K}= (K, +,0)$
has the  \emph{\ftp}~if
 for every $m\geq 1$, every $n\geq 1$,
 every~$m$-vector~$b = (b_1,\ldots,b_m) \in K^m$  
 and
 every~$n$-vector~$c = (c_1,\ldots,c_n) \in K^n$  
 such that~$b_1 + \cdots + b_m = c_1 + \cdots + c_n$ holds,  there
is  an~$m \times n$
matrix~$D = (d_{ij} : i \in [m], j \in [n]) \in K^{m \times n}$  whose 
rows sum to $b$ and whose columns sum to~$c$,
 i.e.,~$d_{i1} + \cdots + d_{in} = b_i$ for all~$i \in [m]$
 and~$d_{1j} + \cdots + d_{mj} = c_j$ for all~$j \in [n]$.
 Theorem 3 in \cite{AK25} asserts that for a positive commutative monoid $\mathbb K$, the following statements are equivalent:
 \begin{enumerate}
 \item $\mathbb K$ has the inner consistency property.
 \item $\mathbb K$ has the \ftp.
 \end{enumerate}
We will now show that the transportation property implies  the \fpp.

Let $\mathbb K$ be a positive commutative monoid that has the \ftp.
Assume that $n\geq 1$  and  $b, c_1,c_2,\ldots,c_n$ are elements in $K$ such that
$b\sqsubseteq c_1+c_2+\cdots,+c_n$.
Therefore, there is an element
   $b'\in K$ such that $b+b'=c_1+ c_2 + \cdots + c_n$.
Consider the vectors $(b,b')$ and  $(c_1,c_2,\ldots,c_n)$. Since $\mathbb K$ has the transportation property, there are elements $d_{ij}\in K$, where $i=1,2$ and $1\leq j\leq n$, such that the following equations hold:
\begin{equation*}
\begin{array}{ccccccccc}
d_{11} & + & d_{12} & + & \cdots & + & d_{1n} & =      & b  \\
+      &   & +      &   &        &   & +      &        &      \\
d_{21} & + & d_{22} & + & \cdots & + & d_{2n} & =      & b'  \\
\shortparallel  &   & \shortparallel     &   &        &   & \shortparallel     &        & \\
c_1    &   & c_2    &   &        &   & c_n   &        &
\end{array}
\end{equation*}
Since $d_{1j}+d_{2j}=c_j$ for $1\leq j \leq n$, we have that $d_{1,j}\sqsubseteq c_j$ for $1\leq j\leq n$. Thus, the elements $d_{1,1}, d_{1,2},\ldots, d_{1n}$ witness that $\mathbb K$ has the \fpp.

We can now put everything together: if $\mathbb K$ has the inner consistency, then, by Theorem 3 in \cite{AK25}, $\mathbb K$ has the \ftp. Therefore, by what  was shown above, $\mathbb K$ has the \fpp. Hence, by Theorem \ref{thm:fpp}, $\mathbb K$ has a semijoin function.
\end{proof}

As immediate consequences of Theorem \ref{thm:semijoin-exist}, 
the bag monoid ${\mathbb N}=(N,+,0)$, the monoid
${\mathbb R}^{\geq 0}=([0,\infty), +, 0)$, and the monoid  ${\mathbb T}= ((-\infty,\infty], \min, \infty)$
have  semijoin functions, because these three monoids are known to have  the inner consistency property \cite{DBLP:conf/pods/AtseriasK21}. 

By Theorem \ref{thm:semijoin-exist}, the inner consistency property implies the existence of a semijoin function. The converse is not true.
Indeed, 
let ${\mathbb N}_2=(\{0,1, 2\}, \oplus, 0)$ be the positive commutative monoid in which $0$ is the neutral element and
$1\oplus 1 = 2 = 1 \oplus 2 = 2 \oplus 1 = 2\oplus 2$. In \cite{AK25}, it was shown that ${\mathbb N}_2$ does not have the inner consistency property. It is easy to see that 
${\mathbb N}_2$ has the \fpp~for  $n=2$, so,  by Theorem \ref{thm:fpp},  ${\mathbb N}_2$
has a semijoin function. Thus, the inner consistency property is stricter than the existence of a semijoin function.

The last result in this section asserts that the existence of a semijoin function and the \emph{cancellativity} property imply the inner consistency property.

\begin{itemize}
    \item  We say that a  monoid
    ${\mathbb K}=(K,+, 0)$ is \emph{cancellative} if for all elements $a,b,c$ in $K$ with $a+b=a+c$ , we have that $b=c$.
\end{itemize}

\begin{proposition} \label{prop:cancel}
Let ${\mathbb K}=(K,+,0)$ be a positive commutative monoid. If $\mathbb K$ is cancellative and has a semijoin function, then $\mathbb K$ has the inner consistency property.
\end{proposition}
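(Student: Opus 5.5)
By Theorem~\ref{thm:fpp}, the hypothesis that $\mathbb K$ has a semijoin function means that $\mathbb K$ has the \fpp. I will also use the equivalence quoted in the proof of Theorem~\ref{thm:semijoin-exist}: by Theorem 3 in \cite{AK25}, the inner consistency property is equivalent to the \ftp. So the plan is to show that, for a cancellative positive commutative monoid, the \fpp~implies the \ftp. In this sense the argument reverses the one in the proof of Theorem~\ref{thm:semijoin-exist}, and cancellativity is what allows the reversal.

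I will prove the \ftp~by induction on the number $m$ of rows, with $n$ arbitrary. Suppose $b_1+\cdots+b_m=c_1+\cdots+c_n$.

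\emph{Base case $m=1$.} Take $d_{1j}=c_j$ for every $j$. The single row then sums to $c_1+\cdots+c_n=b_1$, and each column sum is trivially $c_j$.

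\emph{Inductive step, $m\geq 2$.} First, $b_1\sqsubseteq c_1+\cdots+c_n$, witnessed by $b_2+\cdots+b_m$. The \fpp~then yields elements $d_{11},\ldots,d_{1n}$ with $d_{1j}\sqsubseteq c_j$ for each $j$ and $d_{11}+\cdots+d_{1n}=b_1$. For each $j$, choose $c_j'$ with $d_{1j}+c_j'=c_j$. Summing over $j$ gives
$$b_1+(b_2+\cdots+b_m)=c_1+\cdots+c_n=(d_{11}+\cdots+d_{1n})+(c_1'+\cdots+c_n')=b_1+(c_1'+\cdots+c_n').$$
Cancellativity now gives $b_2+\cdots+b_m=c_1'+\cdots+c_n'$. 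The induction hypothesis applied to the vectors $(b_2,\ldots,b_m)$ and $(c_1',\ldots,c_n')$ provides entries $d_{ij}$ for $2\leq i\leq m$ whose rows sum to $b_i$ and whose columns sum to $c_j'$. Adding the first row $d_{11},\ldots,d_{1n}$ gives an $m\times n$ matrix whose rows sum to $(b_1,\ldots,b_m)$. Its $j$-th column sums to $d_{1j}+c_j'=c_j$, so it is the required transportation matrix.

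The only delicate step is the cancellation. Without it, we only know that $b_1+(b_2+\cdots+b_m)=b_1+\sum_j c_j'$, and in a non-cancellative monoid we cannot conclude that $b_2+\cdots+b_m=\sum_j c_j'$. The example ${\mathbb N}_2$ shows this failure is genuine: it has a semijoin function but lacks the inner consistency property. The remaining steps are routine once the production property is invoked, and the conclusion follows from Theorem 3 in \cite{AK25}.
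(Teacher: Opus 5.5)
Your proof is correct. It uses the same key step as the paper but organizes the argument differently.

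The paper's route:
\begin{itemize}
\item It invokes an external equivalence (Proposition~10 of \cite{AK25}, together with a textbook reference): the transportation property holds as soon as its $2\times 2$ case does.
\item It then checks only that $2\times 2$ case. The production property for $n=2$ gives $d_1\sqsubseteq c_1$ and $d_2\sqsubseteq c_2$ with $d_1+d_2=b_1$.
\item One cancellation shows that the complements $e_1,e_2$ satisfy $e_1+e_2=b_2$.
\end{itemize}
That computation is exactly the $m=n=2$ instance of your inductive step.

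Your route proves the full $m\times n$ transportation property directly, by induction on the number of rows. At each step you peel off one row using the production property for arbitrary $n$, then cancel $b_1$ to obtain a balanced residual problem with one fewer row.

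What your version buys is self-containment. Apart from Theorem~3 of \cite{AK25}, which both arguments use, you need only results proved in the paper. In particular, Theorem~\ref{thm:fpp} already lifts the production property from $n=2$ to all $n$, so you do not need the reduction of the general transportation property to its $2\times 2$ case. The paper's version is shorter on the page and needs only the $n=2$ production property, but it rests on that additional cited equivalence.
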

\begin{proof}
By Theorem \ref{thm:semijoin-exist} and Theorem 3 in \cite{AK25}, it suffices to show that if $\mathbb K$ is cancellative and has the \fpp, then $\mathbb K$ has the transportation property.  It is also known 
(see \cite[Proposition 10]{AK25} and \cite[page 508]{DBLP:books/daglib/0023547})
that $\mathbb K$ has the transportation property if and only if $\mathbb K$ has the $2\times 2$ transportation property, that is, for all  $(b_1,b_2) \in K^2$ and for $(c_1,c_2) \in K^2$ with $b_1+b_2=c_1+c_2$, there are elements $d_{11},d_{12}, d_{21}, d_{22}$ in $k$ such that
\begin{align*}
d_{11}  +  d_{12} & =  b_1 & 
d_{21}  +  d_{22} &=   b_2 &
d_{11}  +  d_{21} & =   c_1 &
d_{12}  +  d_{22} &=   c_2.
\end{align*}
So, assume that $\mathbb K$ is cancellative and has the \fpp. We will show that $\mathbb K$ has the $2\times 2$ transportation property. Towards this, assume that $b_1,b_2,c_1,c_2$ are elements in $K$ such that $b_1+b_2 = c_1 +c_2$. It follows that $b_1\sqsubseteq c_1+c_2$, hence, by the \fpp, there are elements
$d_1,d_2$ in $\mathbb K$ such that $d_1\sqsubseteq c_1$, $d_2\sqsubseteq c_2$, and $d_1+d_2= b_1$. Since $d_1\sqsubseteq c_1$ and $d_2\sqsubseteq c_2$, there are elements $e_1,e_2$ in $K$ such that $d_1+e_1=c_1$ and $d_2+e_2=c_2$. Since $b_1+b_2 = c_1 +c_2$, we have that
$d_1+d_2+b_2 = d_1+e_1+d_2+e_2= d_1+d_2+e_1+e_2$. Since $\mathbb K$ is cancellative, it follows that $b_2=e_1+e_2$. Therefore, the desired elements $d_{ij}$ are
$d_{11}=d_1$, $d_{12}=d_2$, 
$d_{21}=e_1$, $d_{22}=e_2$.
\end{proof}

 Proposition \ref{prop:cancel}  yields a useful tool for identifying monoids that do not have a semijoin function;   the following is vast generalization of Proposition  \ref{prop:<3,5>}.

\begin{corollary} \label{cor:numerical} 
${\mathbb N}=(N,+,0)$ is the only  numerical semigroup that has a semijoin function.
\end{corollary}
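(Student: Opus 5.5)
Every numerical semigroup is a submonoid of $\mathbb N$, so it is cancellative, and it is positive because its elements are non-negative integers. The plan is therefore to combine Proposition \ref{prop:cancel} with the fact from \cite{AK25}, recalled earlier, that no numerical semigroup other than $\mathbb N$ has the inner consistency property. The two directions are as follows.

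First, $\mathbb N$ itself has a semijoin function. This follows from Theorem \ref{thm:semijoin-exist}, since $\mathbb N$ has the inner consistency property \cite{DBLP:conf/pods/AtseriasK21}. One can also check directly that $\mathbb N$ has the \fpp~for $n=2$, because $\sqsubseteq$ on $\mathbb N$ is the usual order $\leq$. Given $b\leq c_1+c_2$, take $d_1=\min(b,c_1)$ and $d_2=b-d_1$. Then $d_2\leq c_2$, and Theorem \ref{thm:fpp} applies.

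Second, let $\mathbb K$ be a numerical semigroup with $\mathbb K\neq\mathbb N$ and suppose, for contradiction, that $\mathbb K$ has a semijoin function. Since $\mathbb K$ is positive, commutative and cancellative, Proposition \ref{prop:cancel} gives the inner consistency property, which contradicts the cited result of \cite{AK25}.

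The main point that needs care is the cited fact. To keep the argument self-contained, I would instead refute the \fpp~for $n=2$ directly, as in Proposition \ref{prop:<3,5>}. Since $\mathbb K\neq\mathbb N$, we have $1\notin K$. Let $m$ be the smallest positive element of $K$; then $m\geq 2$. Because $K$ is cofinite, $K$ contains every integer from its conductor onward, so I can pick $b\in K$ with $b\geq 2m$ and $b-m\notin K$.

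Such a $b$ exists. Let $g$ be the Frobenius number (the largest gap, which exists since $1\notin K$), and take $b=g+m$. This $b$ lies in $K$ because $b>g$, and $b-m=g\notin K$. The condition $b\geq 2m$ can fail only if $g<m$. In that case the gaps are exactly $1,\ldots,m-1$ and $K=\{0\}\cup[m,\infty)$, and I use $c_1=c_2=m$ with $b=m+1$. Then $b\in K$, $b\sqsubseteq 2m$ because $2m-b=m-1$ is a gap... so that witness fails, and in this case I would instead take $c_1=m$, $c_2=m+1$ and $b=m+1$.

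Rather than handle these cases ad hoc, I would prefer the route through Proposition \ref{prop:cancel} together with \cite{AK25}. That choice is legitimate here, because the paper explicitly states the fact from \cite{AK25} earlier, and the corollary then follows immediately.
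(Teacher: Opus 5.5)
Your argument is the paper's own: numerical semigroups are cancellative, so by Proposition~\ref{prop:cancel} a semijoin function would force the inner consistency property, which by \cite{AK25} only $\mathbb N$ has, while $\mathbb N$ does have a semijoin function by Theorem~\ref{thm:semijoin-exist} (a direction the paper leaves implicit). Do discard the abandoned direct detour, since it is wrong as written: the general case never specifies $c_1,c_2$, and the fallback witness $c_1=m$, $c_2=m+1$, $b=m+1$ fails because $d_1=0$, $d_2=m+1$ is a valid decomposition (for $K=\{0\}\cup[m,\infty)$ with $m\geq 2$, the choice $c_1=c_2=m+1$, $b=m+2$ would work, since $d_i\sqsubseteq m+1$ forces $d_i\in\{0,m+1\}$).
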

\begin{proof}
As shown in \cite{AK25}, the bag monoid $\mathbb N$ is the only numerical semigroup that has the inner consistency property. However, every numerical semigroup is cancellative. 
\end{proof}

\section{Semijoin Programs and Full Reducers} \label{sec:full-reducer}

 Semijoin programs and  full reducers in relational databases were first studied in \cite{DBLP:journals/jacm/BernsteinC81,DBLP:journals/siamcomp/BernsteinG81,BeeriFaginMaierYannakakis1983}.  
\begin{itemize}
\item 
A \emph{semijoin program on a schema $X_1,\ldots,X_m$} is a finite sequence of assignment statements of the form
    $X_i := X_i \ltimes X_j$, where $i,j$ are indices from $[1,m]$.
    \item Let $\pi$ be a semijoin program on $X_1,\ldots,X_m$  and let $R_1(X_1),\ldots,R_m(X_m)$ be a collection of standard relations. We write $R^*_1(X_1),\ldots,R^*_m(X_m)$ to denote the collection of standard relations resulting from $\pi$, when each $X_i$ is interpreted by the standard relation  $R_i(X_i)$.
    \item A semijoin program $\pi$ on a hypergraph $H$ with  $X_1,\ldots,X_m$ as its hyperedges is a \emph{full reducer on $H$} if for every collection $R_1(X_1),\ldots,R_m(X_m)$ of standard relations, the resulting collection
$R^*_1(X_1),\ldots,R^*_m(X_m)$  is globally consistent.
\end{itemize}
Beeri et al.~\cite{BeeriFaginMaierYannakakis1983} characterized acyclicity in terms of the existence of full reducers. Earlier, a variant of this result was obtained by Bernstein and Goodman \cite{DBLP:journals/siamcomp/BernsteinG81}.

\begin{theorem}[\cite{BeeriFaginMaierYannakakis1983}] \label{thm:semijoin-standard}
For every hypergraph $H$, the following statements are equivalent:
\begin{enumerate}
\item $H$ is acyclic.
\item $H$ has a full reducer.
\end{enumerate}
\end{theorem}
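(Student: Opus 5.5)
We prove the two implications of Theorem~\ref{thm:semijoin-standard} separately, using Theorem~\ref{thm:BFMY}.

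\emph{Acyclic implies a full reducer.} By Theorem~\ref{thm:BFMY}, $H$ has the running intersection property. Fix an ordering $Y_1,\ldots,Y_m$ witnessing it, and for each $i\geq 2$ choose a parent $p(i)<i$ with $(Y_1\cup\cdots\cup Y_{i-1})\cap Y_i\subseteq Y_{p(i)}$. This defines a join tree rooted at $Y_1$. The program $\pi$ has two passes.
\begin{itemize}
\item An upward pass, for $i=m,m-1,\ldots,2$: the statement $Y_{p(i)}:=Y_{p(i)}\ltimes Y_i$.
\item A downward pass, for $i=2,\ldots,m$: the statement $Y_i:=Y_i\ltimes Y_{p(i)}$.
\end{itemize}
After the upward pass, induction on the subtrees shows two facts. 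Every tuple of the current $Y_1$-relation extends to a tuple of the join of all relations. Every tuple in a node's relation joins with some tuple in each child's relation. The downward pass then makes each child relation exactly the set of its tuples that join with the (already reduced) parent. Consequently each parent--child pair is consistent, and each child's tuples all extend to tuples of the full join. This is because the separator $Y_i\cap(Y_1\cup\cdots\cup Y_{i-1})$ lies inside $Y_{p(i)}$, so agreement with the parent is the only constraint linking $Y_i$ to the earlier hyperedges. Hence the final relations are pairwise consistent. Since semijoins only delete tuples and the full join is unchanged by them, the join of the $R_i^*$ projects onto each $R_i^*$, so the collection is globally consistent. Alternatively, once pairwise consistency is established, Theorem~\ref{thm:BFMY} gives global consistency directly.

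The main obstacle in this direction is stating the invariant for the upward pass cleanly. The invariant we maintain is: after processing the subtree rooted at $Y_j$, every tuple in the current $Y_j$-relation extends to a tuple of the join of the original relations on that subtree. The downward pass then propagates this property to all nodes.

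\emph{A full reducer implies acyclic.} Suppose $H$ is cyclic. By Theorem~\ref{thm:BFMY}, there are pairwise consistent standard relations $R_1,\ldots,R_m$ over $H$ that are not globally consistent. Every semijoin $R_i\ltimes R_j$ of consistent relations equals $R_i$, by property (P1) for the Boolean semijoin. So every semijoin program $\pi$ leaves this collection unchanged, and its output is not globally consistent. Hence no full reducer exists. This direction is immediate once Theorem~\ref{thm:BFMY} is invoked; all the real work lies in the first direction.
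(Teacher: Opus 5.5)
Your proof is correct. The direction from a full reducer to acyclicity is the paper's own argument (direction $(3)\Rightarrow(1)$ of Theorem~\ref{thm:full-reducer}): (P1) leaves a pairwise consistent but globally inconsistent collection unchanged. Your semijoin program is also the paper's two-pass program.

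Where you differ is in how you verify that program. The paper gives no separate proof of this statement; it is the case $\mathbb K=\mathbb B$, with $\ltimes$ as the semijoin function, of Theorem~\ref{thm:full-reducer}. That proof works only with marginals on separators. Claim~1 gets consistency of each pair $R^*_k,R^*_{j_k}$ from (P2)--(P4). Claim~2 then glues consistency witnesses along the running-intersection ordering using the inner consistency property.

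You argue with tuples instead. Your upward invariant (each tuple at a node extends into the join of its subtree), propagated downward, shows that every tuple of every $R^*_i$ extends to the full join $J$ of the original relations. Since $\ltimes$ never deletes $t[Y_i]$ for $t\in J$, this gives $R^*_i=J[Y_i]$, with $J$ itself as the witness. This is essentially the classical argument the paper attributes to Beeri et al.

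What each route buys:
\begin{itemize}
\item Yours identifies the output explicitly as the projections of the full join. However, it depends on the join being a consistency witness, which fails for bags.
\item The paper's marginal-level argument needs only (P2)--(P4) and inner consistency. It therefore carries over to every semijoin function on every monoid with the inner consistency property.
\end{itemize}

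Two small points of precision. First, the gluing in your downward pass needs every attribute shared by the subtree at $Y_i$ and the rest of the tree to lie in $Y_i\cap Y_{p(i)}$. This comes from the connectedness of the join tree, i.e.\ from the running-intersection inclusions at all indices. It does not follow from the single inclusion at index $i$ that you cite, because the rest of the tree also contains later hyperedges.

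Second, your ``alternatively'' via Theorem~\ref{thm:BFMY} adds nothing. In your argument, pairwise consistency of non-adjacent pairs comes only from the full-join extension, which already gives global consistency. Parent--child consistency by itself would need a gluing step like the paper's Claim~2.
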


We will investigate the question: does Theorem \ref{thm:semijoin-standard} extend to annotated relations? 
In what follows, we assume that ${\mathbb K}=(K,+,0)$ is a positive commutative monoid. 

\begin{definition}\label{defn:sj-program}
Let $X_1,\ldots,X_m$ be a schema
and let $\sjk$ be a binary function symbol.
\begin{itemize}
    \item A \emph{semijoin program on $X_1,\ldots,X_m$} is a finite sequence of assignment statements of the form
    $X_i := X_i \sjk X_j$, where $i,j$ are indices from $[1,m]$.
    \item Let $\pi$ is a semijoin program on $X_1,\ldots,X_m$, let $\sj$ be a semijoin function on $\mathbb K$, and let \\$R_1(X_1),\ldots,R_m(X_m)$ be a collection of $\mathbb K$-relations. We write $R^*_1(X_1),\ldots,R^*_m(X_m)$ to denote the collection of $\mathbb K$-relations resulting from $\pi$ when $\sjk$ is interpreted by the semijoin function $\sj$ and each $X_i$ is interpreted by the $\mathbb K$-relation $R_i(X_i)$. 
\end{itemize}
\end{definition}

Clearly, the evaluation of a semijoin program on a collection of $\mathbb K$-relations is meaningful only when a semijoin function on $\mathbb K$ exists.
\begin{example}
Consider the semijoin program

\centerline{$X_3 :=  X_3 \sjk X_2;~~ 
X_2 :=  X_2 \sjk X_3;~~ 
X_1 : =   X_1 \sjk X_2.$}

If $R_1(X_1), R_2(X_2), R_3(X_3)$ are three $\mathbb K$-relations and $\sj$ is a semijoin function on $\mathbb K$, then the preceding semijoin program produces the relations  $R_1^*(X_1), R_2^*(X_2), R_3^*(X_3)$, where

\centerline{$R_3^* :=  \sj(R_3,R_2);~~ 
R_2^* :=  \sj(R_2,\sj(R_3,R_2)); ~~
R_1^* : =  \sj(R_1, \sj(R_2,\sj(R_3,R_2))).$}
\end{example}

 \begin{definition} \label{defn:full-reducer}
 Let $H$ be a hypergraph with 
 $X_1,\ldots, X_m$ as its hyperedges. 
 \begin{itemize}
\item Let $\pi$ be a semijoin program on $X_1,\ldots,X_m$ and let $\sj$ be a semijoin function on $\mathbb K$.

The pair $(\pi,\sj)$ is a \emph{full reducer for $H$} if for every collection $R_1(X_1),\ldots,R_m(X_m)$ of $\mathbb K$-relations, the resulting collection
$R^*_1(X_1),\ldots,R^*_m(X_m)$ of $\mathbb K$-relations is globally consistent.
\item $H$ has a \emph{full reducer on $\mathbb K$} if there are a
semijoin program $\pi$ on $H$ and a 
semijoin function $\sj$ function on $\mathbb K$ such that the pair $(\pi,\sj)$ is a full reducer for $H$.
\end{itemize}
\end{definition}

 We are now ready to state and prove the main result of this section.

 \begin{theorem} 
 \label{thm:full-reducer}
Let $\mathbb K$ be a positive commutative monoid that has the inner consistency property.  For every hypergraph $H$, 
 the following statements are equivalent:
\begin{enumerate} 
\item $H$ is acyclic.
\item  There is a semijoin program $\pi$ on $H$ that depends only on $H$ such that for every semijoin function $\sj$ on $\mathbb K$, the pair $(\pi,\sj)$ is a full reducer for $H$.
\item $H$ has a full reducer on $\mathbb K$.
\end{enumerate}
\end{theorem}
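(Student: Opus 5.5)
The plan is to prove $(1)\Rightarrow(2)\Rightarrow(3)\Rightarrow(1)$. The middle implication is immediate: by Theorem \ref{thm:semijoin-exist}, $\mathbb K$ has at least one semijoin function $\sj$. So the program $\pi$ from (2), paired with $\sj$, is a full reducer on $\mathbb K$.

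For $(1)\Rightarrow(2)$, I use Theorem \ref{thm:BFMY} to fix a running intersection ordering $Y_1,\ldots,Y_m$ with parent function $p(i)<i$, so that $(Y_1\cup\cdots\cup Y_{i-1})\cap Y_i\subseteq Y_{p(i)}$. The program $\pi$ is the classical two-pass program.

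\begin{itemize}
\item An upward pass, for $i=m,m-1,\ldots,2$: $Y_{p(i)} := Y_{p(i)}\sjk Y_i$.
\item A downward pass, for $i=2,\ldots,m$: $Y_i := Y_i\sjk Y_{p(i)}$.
\end{itemize}

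This depends only on $H$. Fix any semijoin function $\sj$. The goal is to show that every parent--child pair $(R^*_{p(i)},R^*_i)$ in the final output is inner consistent, hence consistent by the inner consistency property.

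\begin{itemize}
\item \emph{Upward pass.} Each statement $R_{p(i)}:=\sj(R_{p(i)},R_i)$ only shrinks $R_{p(i)}$ in the preorder $\sqsubseteq$, by (P2). Let $Z=Y_i\cap Y_{p(i)}$. By (P3), after the step we have $R_{p(i)}[Z]\sqsubseteq R_i[Z]$. Later upward steps touching $R_{p(i)}$ only shrink it further, so by transitivity the inequality survives. Hence at the end of the upward pass, $R_{p(i)}[Z]\sqsubseteq R_i[Z]$ holds for every child $i$.
\item \emph{Downward pass.} When we process $i$, the parent $R_{p(i)}$ has its final value. Child $i$ has not been modified since the upward pass; it was only used as the second argument there, except when it was itself a parent, in which case it was processed before. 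Thus the hypothesis of (P4) holds, in the orientation $T[Z]\sqsubseteq R[Z]$ with $R=R_i$ and $T=R_{p(i)}$. So (P4) gives $R^*_i[Z]=R^*_{p(i)}[Z]$.
\end{itemize}

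One subtlety needs care. The parent may have been changed by its own downward step (with its own parent) after the upward pass finished. So I must check that $R_{p(i)}[Z]\sqsubseteq R_i[Z]$ is preserved when the parent shrinks. This follows from (P2) and transitivity of $\sqsubseteq$, applied coordinatewise on the marginals; I need that $\sqsubseteq$ on relations passes to projections, which holds since sums of $\sqsubseteq$-inequalities add.

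Finally, a relation is never modified after it has served as a parent in the downward pass. So each parent--child pair is inner consistent, hence consistent. Global consistency of $R^*_1,\ldots,R^*_m$ then follows from Theorem \ref{thm:BFMY-general}. Alternatively, one can glue consistency witnesses inductively along the running intersection order: at step $i$, a witness $W_{i-1}$ on $Y_1\cup\cdots\cup Y_{i-1}$ has $W_{i-1}[Y_{p(i)}]=R^*_{p(i)}$, so it is inner consistent with $R^*_i$ on the intersection, which lies in $Y_{p(i)}$. Inner consistency then makes them consistent.

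For $(3)\Rightarrow(1)$, suppose $(\pi,\sj)$ is a full reducer and $H$ is cyclic. By Theorem \ref{thm:BFMY-general}, there is a pairwise consistent collection $R_1,\ldots,R_m$ that is not globally consistent. By (P1), every semijoin step applied to consistent arguments returns its first argument unchanged. By induction along $\pi$, the relations never change, so $R^*_i=R_i$. This output is not globally consistent, which is a contradiction.

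The main obstacle is the bookkeeping in the two-pass argument: verifying that the inequalities established in the upward pass persist through later modifications, in the right orientation for (P4). This is where the monotonicity of projection under $\sqsubseteq$ is used.
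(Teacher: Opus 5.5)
Your proposal is correct and is essentially the paper's proof. It uses the same two-pass semijoin program along a running-intersection ordering and the same (P3)/(P2)/(P4) bookkeeping as the paper's Claim~1. Your ``alternatively'' witness-gluing argument is exactly the paper's Claim~2, and your $(3)\Rightarrow(1)$ uses the same (P1) argument with Theorem~\ref{thm:BFMY-general}.

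Do rely on the gluing route rather than your first suggestion, though. Theorem~\ref{thm:BFMY-general} requires \emph{every} pair $R^*_i,R^*_j$ to be consistent, but you have only established this for parent--child pairs. Bridging that gap would need an extra join-tree path argument, which the gluing induction avoids.
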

\begin{proof}
We will prove this theorem in a round-robin style.

\noindent $(1) \Longrightarrow (2)$
Since $H$ is an acyclic hypergraph, $H$ has the running intersection property (see Theorem \ref{thm:BFMY}). This means that there is an ordering 
$X_1,\ldots,X_m$ of the hyperedges of $H$ so that the following holds:
for every $i$ with $1< i \leq m$, there is an index 
$j_i < i$ such that
$(X_1 \cup \cdots \cup X_{i-1})\cap X_i \subseteq X_{j_i}$.  Note that if $i=2$, then $j_i=1$, but if 
$i>2$, then $j_i$ can be any index smaller than $i$; for example, if $i=3$, then $j_3$ can be $1$ or $2$. Let $\pi$ be the following semijoin program on $\mathbb K$, where each assignment statement is labeled by a number:
\begin{align*}
(-m) & ~~~ X_{j_m}  := X_{j_m} \sjk X_m \\
(-m+1) & ~~~  X_{j_{m-1}}  :=   X_{j_{m-1} } \sjk X_{m-1}\\
\cdots & \\
(-k) & ~~~ X_{j_k} := X_{j_k} \sjk X_k \\
\cdots & \\
(-3) & ~~~ X_{j_3} :=  X_{j_3} \sjk X_3 \\
(-2) & ~~~ X_1 :=  X_1 \sjk X_2 \\
(2) & ~~~ X_2 := X_2 \sjk X_1 \\
(3) & ~~~ X_3 :=  X_3 \sjk X_{j_3} \\
\cdots & \\
(k) & ~~~ X_k  := X_k \sjk X_{j_k} \\
\cdots & \\
(m-1) & ~~~  X_{m-1} : = X_{m-1} \sjk X_{j_{m-1}}  \\
(m) & ~~~ X_m := X_m \sjk  X_{j_m}. 
\end{align*}
Let $\sj$ be a semijoin function on $\mathbb K$, let $R_1(X_1),\ldots,R_m(X_m)$ be a collection of $\mathbb K$-relations, and let $R^*_1(X_1),\ldots,R^*_m(X_m)$ be the collection of $\mathbb K$-relations obtained by applying $\pi$ to the semijoin function $\sj$ and the collection $R_1(X_1),\ldots,R_m(X_m)$. 
We have to show that the collection $R^*_1(X_1),\ldots,R^*_m(X_m)$ is globally consistent. We start with a crucial stepping stone.

\smallskip

\noindent{\emph{Claim 1.}} For every $k$ with
$2\leq k \leq m$, the $\mathbb K$-relations 
$R^*_k(X_k)$ and $R^*_{j_k}(X_{j_k})$ are consistent.

By the inner consistency of $\mathbb K$, it suffices to show that
$R^*_k[X_{j_k}\cap X_k] = R^*_{j_k}[X_{j_k}\cap X_k]$. If $p$ is one of the labels of the assignment statements of $\pi$, we will write $R^{(p)}_i$ to denote the $\mathbb K$-relation over $X_i$ after the $p$-th line of $\pi$ has been applied. In particular, we have that $R_i^{(m)}(X_i)= R_i^*(X_i)$, for all $i$ with $1\leq i\leq m$. Consider the assignment statement $X_{j_k}  := X_{j_k} \sjk X_k$ of $\pi$ labeled by $(-k)$.  Thus,  
\begin{equation}R^{(-k)}_{j_k}=\sj(R^{(-k+1)}_{j_k},R^{(-k+1)}_k).
\end{equation}
By property
$\sf{(P3)}$ of the semijoin function $\sj$,
we have that
\begin{equation} \sj(R^{(-k+1)}_{j_k},R^{(-k+1)}_k)[X_{j_k}\cap X_k]\sqsubseteq R^{(-k+1)}_k[X_{j_k}\cap X_k].
\end{equation}
From (1) and (2), it follows that  
\begin{equation}R^{(-k)}_{j_k}[X_{j_k}\cap X_k]   \sqsubseteq   R^{(-k+1)}_k[X_{j_k}\cap X_k]. 
\end{equation}
Since $j_i < i$ holds for every $i$ with $2\leq i\leq m$,  the set $X_i$ does not appear in the left-hand side of any assignment statement of $\pi$ with labels strictly between $-k$ and $k$. Thus,
\begin{equation}
R_k^{(-k+1)} = R_k^{(k-1)}.
    \end{equation}
From (3) and (4), it follows that
\begin{equation}R^{(-k)}_{j_k}[X_{j_k}\cap X_k]   \sqsubseteq   R^{(k-1)}_k[X_{j_k}\cap X_k]. 
\end{equation}
Property $\sf{(P2)}$ of the semijoin function $\sj$ asserts that $\sj(R,T)\sqsubseteq R$, for all $\mathbb K$-relations $R$ and $T$.  From this property and  the fact that every assignment statement of $\pi$ is of the form
 $X_i  := X_i \sjk X_{j_i}$, it follows that if $s \leq t$, then $R^{(t)}_i \sqsubseteq R^{(s)}_i$ holds for every $i$ with $1\leq i\leq m$.   In particular, 
 $R_{j_k}^{(k-1)}\sqsubseteq R_{j_k}^{(-k)}$ holds. Consequently, we have that
 \begin{equation}
 R_{j_k}^{(k-1)}[X_{j_k}\cap X_k]\sqsubseteq R_{j_k}^{(-k)}[X_{j_k}\cap X_k].
 \end{equation}
 From (5), (6), and the transitivity of the pre-order $\sqsubseteq$, it follows that 
 \begin{equation}
 R_{j_k}^{(k-1)}[X_{j_k}\cap X_k]\sqsubseteq R_k^{(k-1)}[X_{j_k}\cap X_k].
 \end{equation}
 Consider now the assignment statement $X_k:= X_k \sjk X_{j_k}$ of $\pi$ labeled by $(k)$. Thus,
 \begin{equation}
R_k^{(k)}=\sj(R_k^{(k-1)},R_{j_k}^{(k-1)}).
 \end{equation}
By property $\sf {(P4)}$ of 
$\sj$, if  $R$ and $T$ are two
$\mathbb K$-relations such that  $T[X\cap Y]\sqsubseteq R[X\cap Y]$, then 
     $\sj(R,T)[X\cap Y]=T[X\cap Y]$.
     From (7), (8), and this property, it follows that
     \begin{equation}
     R^{(k)}_k[X_{j_k}\cap X_k] = R_{j_k}^{(k-1)}[X_{j_k}\cap X_k].
     \end{equation}
     Observe that none of the assignment statements of $\pi$ that come after the assignment statement with label $(k)$ has $X_k$ of $X_{j_k}$ in its left-hand side. Consequently,  $R^*_k= R^{(m)}_k= R^{(k)}_k$ and $R^*_{j_k} = R^{(m)}_{j_k} = R^{(k-1)}_{j_k}$. From these two facts and (9), it follows that
     \begin{equation}
     R^*_k[X_{j_k}\cap X_k] = R_{j_k}^*[X_{j_k}\cap X_k],
     \end{equation}
     which establishes Claim 1.
We are now ready to complete the proof of the theorem.

\smallskip

\noindent{\emph{Claim 2.}} For every $k$ with $2\leq k\leq m$, the collection
$R_1^*(X_1),\ldots,R_k^*(X_k)$ of $\mathbb K$-relations is globally consistent.

This claim will be proved by induction on $k$.

\noindent{\emph{Base Case.}}  Claim 2 is true for $k=2$, because the $\mathbb K$-relations $R_2^*(X_1)$ and
$R^*_{j_2}(X_{j_2})$ are consistent by Claim 1 and, as pointed out earlier, $j_2=1$.

\noindent{\emph{Inductive Step.}} Assume that the collection 
$R_1^*(X_1),\ldots,R_1^*(X_k)$ is globally consistent, where $2< k < m$.  We have to show that the collection
$R_1^*(X_1),\ldots,R_1^*(X_k),R_{k+1}^*(X_{k+1})$
is globally consistent as well.
Let $W(X_1\cup \cdots \cup X_k)$ be a $\mathbb K$-relation that witnesses the global consistency of the collection
$R_1^*(X_1),\ldots,R_1^*(X_k)$. This means that $W[X_j]=R_j(X_j)$, for $1\leq j\leq k$.
We will first show that the $\mathbb K$-relations $W(X_1\cup \cdots \cup X_k)$ and
$R_{k+1}^*(X_{k+1})$ are consistent.  By the inner consistency of $\mathbb K$, it suffices to show that
\begin{equation}
W[(X_1\cup \cdots \cup X_k)\cap X_{k+1}]= 
R_{k+1}[(X_1\cup \cdots \cup X_k)\cap X_{k+1}].
\end{equation}
In what follows, we will use  the second part of Proposition \ref{lem:easyfacts1}, which asserts that
for every $\mathbb K$-relation $R(X)$
and for all $Z \subseteq Y \subseteq X$, we have $R[Y][Z] = R[Z]$. Since the ordering $X_1,\ldots,X_m$ has the running intersection property, there is an index $l \leq k$ with the property that $(X_1\cup \cdots X_k)\cap X_{k+1} \subseteq X_l$. 
We now have
\begin{equation}
W[(X_1\cup \cdots X_k)\cap X_{k+1}]= W[X_l][(X_1\cup \cdots X_k)\cap X_{k+1}]= R^*_l[(X_1\cup \cdots X_k)\cap X_{k+1}],
\end{equation}
where the first equality follows from part 2 of Proposition \ref{lem:easyfacts1} and the second equality follows from the fact that $W[X_l]=R^*_l(X_l)$, since $l\leq k$ and $W$ witnesses the global consistency of the collection $R^*_1(X_1),\ldots,R^*_k(X_k)$. By Claim 1, we have that the $\mathbb K$-relations
$R^*_{k+1}(X_{k+1})$ and $R_l^*(l)(X_l)$ are consistent, hence they are also inner consistent, which means that
\begin{equation}
R_{k+1}^*[X_{k+1}\cap X_l] = R^*_l[X_{k+1}\cap X_l].
\end{equation}
Since  $(X_1\cup \cdots X_k)\cap X_{k+1} \subseteq X_l$, we have that also
  $(X_1\cup \cdots X_k)\cap X_{k+1} \subseteq X_{k+1}\cap X_{l}$ holds.
  From this fact and equation (13), we obtain
  \begin{equation}
R^*_{k+1}[(X_1\cup \cdots X_k)\cap X_{k+1}] = 
R^*_l[(X_1\cup \cdots X_k)\cap X_{k+1}].
  \end{equation}
From equations (12) and (14), we obtain
\begin{equation}
W[(X_1\cup \cdots X_k)\cap X_{k+1}] = 
R^*_{k+1}[(X_1\cup \cdots X_k)\cap X_{k+1}],
  \end{equation}
  which establishes that the $\mathbb K$-relations $W(X_1\cup \cdots\cup X_k)$
  and $R_{k+1}(X_{k+1})$ are consistent.
Let $U(X_1\cup \ldots \cup X_{k+1})$ be a $\mathbb K$-relation that witnesses the consistency of the $\mathbb K$-relations
$W(X_1\cup \cdots\cup X_k)$
  and $R_{k+1}(X_{k+1})$.  We claim that
  $U(X_1\cup \ldots \cup X_{k+1})$ witnesses the global consistency of the collection $R_1^*(X_1),\ldots,R_1^*(X_k),R_{k+1}^*(X_{k+1})$. Indeed, if
  $l\leq k$, we have
  \begin{equation}
  U[X_l] = U[X_1\cup \cdots \cup X_k][X_l]=
  W[X_1\cup \cdots X_k][X_l]=R^*_l(X_l), 
  \end{equation}
  where the first equality follows from part 2 of Proposition \ref{lem:easyfacts1}, the second equality holds because $U(X_1\cup \cdots \cup X_{k+1})$ witnesses the consistency of $W(X_1\cup \cdots \cup X_k)$ and $R_{k+1}(X_{k+1})$, and the third equality  holds because $l\leq k$ and $W(X_1\cup \cdots \cup X_k)$ witnesses the global consistency of the collection $R_1^*(X_1),\ldots,R_1^*(X_k)$.  Finally, $U[X_{k+1}]=R^*_{k+1}(X_{k+1})$, because  $U(X_1\cup \cdots \cup X_{k+1})$ witnesses the consistency of $W(X_1\cup \cdots \cup X_k)$ and $R_{k+1}(X_{k+1})$.
  This completes the proof of Claim 2.  By applying Claim 2 to $k=m$, we see that the collection $R_1^*(X_1),\ldots,R_m^*(X_m)$ is globally consistent, hence the pair $(\pi, \sj)$ is a full reducer for $H$.

\smallskip

\noindent $(2) \Longrightarrow (3)$
This direction is obvious (note also that at least one semijoin function on $\mathbb K$ exists because $\mathbb K$ has the inner consistency property).

\smallskip

\noindent $(3) \Longrightarrow (1)$  Assume that $H$ has a full reducer on $\mathbb K$. We will show that $H$ has the \ltgc~for
$\mathbb K$-relation.
Let $X_1,\ldots,X_m$ be the hyperedges of  $H$.  Let $R_1(X_1),\ldots, R_m(X_m)$ be a collection of pairwise consistent $\mathbb K$-relations. We claim that this collection is globally consistent. Let $\pi$ be a semijoin program on $H$ and let $\mathcal S$ be a semijoin function on $\mathbb K$ such that the pair $(\pi,\sj)$ is a full reducer for $H$. It follows that  the resulting collection $R^*_1(X_1),\ldots, R^*_m(X_m)$ of $\mathbb K$-relations is globally consistent. Now, take any assignment statement $R_i:= \cj(R_i,R_j)$ in the semijoin program $\pi$. Since $R_i$ and $R_j$ are consistent $\mathbb K$-relations, property $\sf{(P1)}$ of the semijoin function $\sj$ implies that $\sj(R_i,R_j)=R_i$. Thus, $\pi$ and $\sj$ do not change the input relations
$R_1(X_1)\ldots, R_m(X_m)$, which means that  $R_i^*(X_i) = R_i(X_i)$ holds, for every $i$ with $1\leq i\leq m$. Therefore,   $R_1(X_1),\ldots, R_m(X_m)$  is a globally consistent collection of $\mathbb K$-relations.   Since $H$ has the \ltgc~for $\mathbb K$-relations, 
direction $(2)\Longrightarrow (1)$ of Theorem \ref{thm:BFMY-general} implies that $H$ is acyclic.
\end{proof}

Let's discuss the similarities and the differences between the proof of Theorem \ref{thm:semijoin-standard} in \cite{BeeriFaginMaierYannakakis1983}  and the proof of the preceding Theorem \ref{thm:full-reducer}. As regards similarities,  the shape of the semijoin program $\pi$ is the same in both proofs. Furthermore, the proof of Claim 1 in the proof of Theorem \ref{thm:fpp} follows the same steps  as the proof of a corresponding claim in the proof of Theorem \ref{thm:semijoin-standard}. As regards differences between the two proofs, the standard semijoin operation is used in the proof of Theorem \ref{thm:semijoin-standard}, while here an arbitrary semijoin function on $\mathbb K$ is used. 
Furthermore, once Claim 1 has been established, the proof of Theorem \ref{thm:full-reducer} is quite different from that of Theorem \ref{thm:semijoin-standard}. Indeed, in the  proof of Theorem \ref{thm:semijoin-standard}, the join operation $\Join$ on standard relations is used, while here we have to use consistency witnesses, since, as seen earlier, even the bag-join need not be a witness to the consistency of two bags. Note also that the inner consistency property is used throughout the proof of Theorem \ref{thm:full-reducer}.

The proof of Theorem \ref{thm:full-reducer}  pinpoints the structural properties of an abstract semijoin operation needed to establish the equivalence between acyclicity and the existence of a full reducers for annotated relations. It also brings out the importance of the canonical preorder $\sqsubseteq$ on a monoid, as the properties 
$\sf{(P2)}$, $\sf{(P3)}$, and $\sf{(P4)}$ of a semijoin function involve the canonical preorder in a crucial way.
Finally, Theorem \ref{thm:full-reducer} establishes the rather remarkable fact that for every acyclic hypergraph $H$, a \emph{single}
     semijoin program is  a full reducer for $H$  uniformly  for all  commutative monoids $\mathbb K$ that have the inner consistency property and for all semijoin functions on $\mathbb K$.


Our final result characterizes the inner consistency property  in terms of full reducers.

\begin{corollary} \label{thm:semijoin-char}
Let $\mathbb K$ be a positive commutative monoid. Then the following statements are equivalent:
\begin{enumerate}
\item $\mathbb K$ has the inner consistency property.
\item Every acyclic hypergraph has a full reducer on $\mathbb K$,
\item The $3$-path hypergraph $P_3$ has a full reducer on $\mathbb K$.
\end{enumerate}
\end{corollary}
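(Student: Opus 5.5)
We prove the three statements equivalent in the round-robin order $(1)\Rightarrow(2)\Rightarrow(3)\Rightarrow(1)$, combining Theorem \ref{thm:full-reducer} with Theorem \ref{thm:TP}.

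For $(1)\Rightarrow(2)$, assume $\mathbb K$ has the inner consistency property. Let $H$ be an acyclic hypergraph. Direction $(1)\Rightarrow(3)$ of Theorem \ref{thm:full-reducer} gives a full reducer for $H$ on $\mathbb K$. This is non-vacuous: by Theorem \ref{thm:semijoin-exist}, $\mathbb K$ has a semijoin function, so the pair $(\pi,\sj)$ exists. The implication $(2)\Rightarrow(3)$ is immediate, because $P_3$ is acyclic; it has the running intersection property via the ordering $\{A_1,A_2\},\{A_2,A_3\},\{A_3,A_4\}$.

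For $(3)\Rightarrow(1)$, we cannot invoke Theorem \ref{thm:full-reducer} directly, since that theorem assumes the inner consistency property. Instead we reuse the argument of its direction $(3)\Rightarrow(1)$, which did not use inner consistency. Suppose $(\pi,\sj)$ is a full reducer for $P_3$ on $\mathbb K$, and let $R_1,R_2,R_3$ be pairwise consistent $\mathbb K$-relations over $P_3$. Every statement of $\pi$ has the form $X_i:=X_i\sjk X_j$.

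We argue by induction along $\pi$ that the current relations never change. Before each statement they are still the original $R_1,R_2,R_3$, which are pairwise consistent, so property $\sf{(P1)}$ gives $\sj(R_i,R_j)=R_i$. Hence $R_i^*=R_i$ for all $i$.

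Since $(\pi,\sj)$ is a full reducer, the output collection is globally consistent, and therefore so is $R_1,R_2,R_3$. Thus $P_3$ has the \ltgc~for $\mathbb K$-relations. Direction $(3)\Rightarrow(1)$ of Theorem \ref{thm:TP} then yields that $\mathbb K$ has the inner consistency property.

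The only point needing care is this last step: we must confirm that the invariance argument relies solely on $\sf{(P1)}$ and on pairwise consistency, and not on any property of $\mathbb K$. It does, so the circle closes.
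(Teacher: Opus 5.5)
Your proposal is correct and follows the paper's proof essentially step for step. It uses the same round-robin: $(1)\Rightarrow(2)$ is read off Theorem~\ref{thm:full-reducer}, $(2)\Rightarrow(3)$ comes from the acyclicity of $P_3$, and $(3)\Rightarrow(1)$ reruns the $\sf{(P1)}$-only argument from Theorem~\ref{thm:full-reducer} to get the local-to-global consistency property for $P_3$ and then invokes Theorem~\ref{thm:TP}; your explicit induction along $\pi$ showing the relations never change is just a more careful rendering of a step the paper states informally.
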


\begin{proof}
We will prove this result in a round-robin style.

\noindent{$(1)\Longrightarrow (2)$} This follows from direction
$(1)\Longrightarrow (2)$ of Theorem \ref{thm:full-reducer}.

\smallskip

\noindent{$(2)\Longrightarrow (3)$} This is true because that  $3$-Path hypergraph $P_3$ is acyclic.

\smallskip

\noindent{$(3)\Longrightarrow (1)$}  Suppose that  the $3$-Path hypergraph $P_3$ has a full reducer on $\mathbb K$.   The proof of direction $(3)\Longrightarrow (4)$ of Theorem \ref{thm:full-reducer} shows that $P_3$ has the \ltgc~for $\mathbb K$-relations; note that the inner consistency property is not used in that proof, only property $\sf{(P1)}$ of a semijoin function and the definition of a full reducer are used.  Then, direction $(3)\Longrightarrow (1)$ of Theorem 4 implies that  $\mathbb K$ has the inner consistency property.
    \end{proof}

\section{Concluding Remarks}

The work reported here contains both conceptual and technical contributions. The main conceptual contribution is the introduction of the notion of a semijoin function on a monoid, which, in turn,  makes it possible to introduce the notion of a full reducer for a schema on a monoid. The main technical contributions are the characterization of the monoids that have a semijoin function and the characterization of acyclicity in terms of the existence of full reducers on monoids.  The latter  characterization vastly generalizes
the classical result in Beeri et al.\ \cite{BeeriFaginMaierYannakakis1983} and also yields new results for a variety of annotated relations over monoids.

In the Introduction, we  referenced some early influential work on semijoins. This is only a small fraction of the investigation of the semijoin operation on relational databases. There is, for example, a body of work on the semijoin algebra, which is the fragment of relational algebra in which the semijoin is used instead of the join \cite{DBLP:phd/be/LEINDERS08,DBLP:journals/jolli/LeindersMTB05,DBLP:journals/ipl/LeindersTB04}. In particular, it has been shown that the semijoin algebra has the same expressive power as the well-studied guarded fragment of first-order logic. More recently, semijoins have been used as a message-passing mechanism in query processing \cite{DBLP:conf/pods/KhamisNR16} and have also found applications in incremental view maintenance \cite{DBLP:journals/pvldb/WangHDY23}. Furthermore, advanced filtering techniques based on Bloom filters have been used in query processing as a generalization of semijoins \cite{DBLP:conf/cidr/YangZYK24}.
We leave it as future work to explore some of these topics in the context of annotated relations over monoids.

\smallskip

\noindent{\bf Acknowledgments} The work reported here builds on earlier work on annotated relation with Albert Atserias, including the study of the interplay between local and global consistency for annotated relations and the study of consistency witness functions for annotated relations. Many thanks to Albert for insightful comments on an early draft of this paper.


\commentout{Thoughts for future work:

\begin{itemize}
\item Show that if a positive commutative monoid is totally canonically pre-ordered and has an expansion to a semifield, then the Vorob'ev join is a coherent consistency witness function, where in the definition of the Vorob'ev join, the expression $\max_{\sqsubseteq}(R[X\cap Y], T[X\cap Y])$ has to be used in the denominator.

\item Reconcile the notion of a full reducer in Bernstein and Goodman with the notion of a full reducer in Beeri et al. The notes in Notability from the trip to Chicago in January 2026 contain a proof that a full reducer in the sense of Bernstein-Goodman is a full reducer in the sense of Beeri et al.  The other direction is not obvious. One should also verify that the class of tree queries in Bernstein and Goodman coincides with acyclic joins in the sense of Beeri et al.
\end{itemize}
}

\bibliography{biblio}

\end{document}